      \spnewtheorem*{Proof}{}{\itshape*\bfseries*}{\rmfamily}
      \def\Proof@Opargbegintheorem #1#2#3#4{#4\trivlist \item
      {#3#2\@thmcounterend \ }}
\appto\ProcessRunnHead{\markboth{\truncate{220pt}{\authrun}\csuse{@author}}{\truncate{220pt}{\titrun}{\csuse{@title}}}}
\def\truncate#1#2{\ifdim#1<\wd#2\relax \rlap{\fboxsep=1pt\fboxrule1pt\color{red}\fbox{\hbox to#1{\strut\hss}}}\fi}
\newcommand{\IP}{\mathbb P}
\newcommand{\IE}{\mathbb E}
\newcommand{\IR}{\mathbb R}
\newcommand{\IS}{\mathbb S}
\newcommand{\IF}{\mathbb F}
\renewcommand\leq{\leqslant}
\renewcommand\geq{\geqslant}
\newcommand{\cA}{{\mathcal A}}
\newcommand{\cB}{{\mathcal B}}
\newcommand{\cF}{{\mathcal F}}
\newcommand{\cP}{{\mathcal P}}
\newcommand{\cM}{{\mathcal M}}
\newcommand{\cX}{{\mathcal X}}
\newcommand{\indic}[1]{\mathbbm{1}_{#1}}
\renewcommand{\epsilon}{\varepsilon}
\renewcommand{\phi}{\varphi}
\DeclareMathOperator*{\argmax}{arg\,max}
\newcommand*\diff{\mathop{}\!\mathrm{d}}
\def\subclassname{{\bfseries Mathematics Subject Classification
}\enspace}
\def\subclass#1{\par\addvspace\medskipamount{\rightskip=0pt plus1cm
\def\and{\ifhmode\unskip\nobreak\fi\ $\cdot$
}\noindent\subclassname\ignorespaces#1\par}}
\def\JELname{{\bfseries JEL Classification}\enspace}
\def\JEL#1{\par\addvspace\medskipamount{\rightskip=0pt plus1cm
\def\and{\ifhmode\unskip\nobreak\fi\ $\cdot$
}\noindent\JELname\ignorespaces#1\par}}
\spnewtheorem{assumption}[theorem]{Assumption}{\bf}{\rm}
\begin{document}

\journalname{Finance and Stochastics}

\title{Calibration of Local Volatility Models with Stochastic Interest Rates using Optimal Transport}
\titlerunning{Local Volatility Calibration with Stochastic Interest Rates}

\author{Benjamin Joseph\thanks{This research has been supported by BNP Paribas Global Markets and the EPSRC Centre for Doctoral Training in Mathematics
of Random Systems: Analysis, Modelling and Simulation (EP/S023925/1).} \and
        Gr\'{e}goire Loeper \and
        Jan Ob\l{}\'{o}j\thanks{For the purpose of Open Access, the author has applied a CC BY public copyright licence to any Author Accepted Manuscript
(AAM) version arising from this submission.}.
}
\authorrunning{B. Joseph, G. Loeper, J. Ob\l{}\'{o}j} 
\institute{B. Joseph \at
              Mathematical Institute and Christ Church, University of Oxford \\
              \email{benjamin.joseph@maths.ox.ac.uk}          
           \and
           G. Loeper \at
              BNP Paribas Global Markets \\
              \email{gregoire.loeper@bnpparibas.com}
            \and
           J. Ob\l{}\'{o}j \at
              Mathematical Institute and St John's College, University of Oxford \\
              \email{jan.obloj@maths.ox.ac.uk}
}

\date{\today}
\maketitle
\begin{abstract}
We develop a non-parametric, semimartingale optimal transport, calibration methodology for local volatility models with stochastic interest rate. The method finds a fully calibrated model which is the closest, in a way that can be defined by a general cost function, to a given reference model. We establish a general duality result which allows to solve the problem by optimising over solutions to a second order fully non-linear Hamilton-Jacobi-Bellman equation. 
Our methodology is analogous to \textcite{guo2022calibration} and \textcite{guo2022joint} but features a novel element of solving for discounted densities, or sub-probability measures.
As an example, we apply the method to a sequential calibration problem, where a Vasicek model is already given for the interest rates and we seek to calibrate a stock price's local volatility model with volatility coefficient depending on time, the underlying and the short rate process, and the two processes driven by possibly correlated Brownian motions. The equity model is calibrated to any  number of European options prices.
\end{abstract}
\keywords{semimartingale optimal transport \and model calibration \and stochastic interest rates \and computational methods \and local volatility}
\subclass{91G20 \and 91G80 \and 60H30}
\JEL{C02 \and C63 \and G13}

\section{Introduction}
Modelling involves inevitable trade-offs: ``All models are wrong, some models are useful'' as  \textcite{box1987empirical} put it. Models need to capture the important aspects of the system they represent but they also need to be tractable, and analytically and/or numerically solvable. In particular, calibration  -- picking model parameters which recover known outputs -- 
is an essential part of any modelling process. It is a key challenge faced by  financial industry practitioners on a daily basis, as their pricing models need to match market prices of liquid instruments before they can be used to price any bespoke or illiquid products. 

In practice, models for a key underlying, such as the S\&P500 index, will need to be calibrated to a large number of options with different maturities and strikes. This may be nigh impossible for a simple parametric model. \textcite{dupire1994pricing} derived a formula to calibrate a local volatility model to an arbitrary number of options, establishing it as a benchmark for equities modelling (see also \textcite{atlan2006localizing} for its stochastic interest rate extension). However, it came with its own shortcomings. Its calibration poses serious numerical challenges, see  \textcite{bain2021calibration}, and requires interpolation of the data as Dupire's formula assumes a continuum of prices across strikes and maturities. 
Moreover, it has been criticised for wrong dynamic behaviour compared to suitable stochastic volatility models, see \textcite{hagan2002managing}. This issue can lead to potentially costly mispricing of exotic products. Naturally, this motivated further research and led, in particular, to the introduction of local stochastic volatility models (LSV) or forward variance stochasic volatility models, see \textcite{bergomi2015stochastic} for details and historical references. While successful in many ways, these models also came with shortcomings, often related to an inability to calibrate to a new class of products. Prominent among these was the challenge to calibrate jointly to SPX and VIX options, see \textcite{guyon2020joint} for the literature review and discussion.

Monte Carlo methods to calibrate with Dupire's formula have been used in \textcite{deelstra2013local, hok2019calibration, ogetbil2022calibrating}. Another approach is to discretise the non-linear Fokker-Planck equation and directly solve it by finite differences as in \textcite{ren2007calibrating} or finite volume methods as in \textcite{wyns2017finite, engelmann2021calibration}. \textcite{dupire1996unified} derives a non-linear McKean SDE for the local volatility, which in fact recovered the mimicking result of \textcite{gyongy1986mimicking} using a financial argument. This has been solved via Monte Carlo methods developed in \textcite{henry2009calibration} for an approximate calibration, and an exact calibration by the particle method of \textcite{guyon2011smile, guyon2012being}. \textcite{cozma2019calibration} used variance reduction techniques to implement the particle method in a four factor model. All of these methods require a continuum of strikes and maturities to construct a surface of implied volatilities, which in practice results in interpolation and extrapolation of market data to build these surfaces. We adopt a different approach, without this requirement, by formulating the exact calibration of European options as discrete constraints within a convex optimisation problem.

Recently, a non-parametric exact calibration method based on optimal transport has been proposed, which aims to address the above challenges. This method uses semimartingale optimal transport as introduced by \textcite{tan2013optimal}, which is the semimartingale version of the celebrated work of \textcite{benamou2000computational} (see also \textcite{huesmann2019benamou}). 
This approach has already been used in several contexts: for local volatility calibration in \textcite{guo2019local}; for local stochastic volatility calibration in \textcite{guo2022calibration}, for joint VIX SPX calibration in \textcite{guo2022joint} and for optimal investment in \textcite{guoNingloeper1}. The most general formulation of the method, pointing to the breadth of its applications, is found in \textcite{guo2021path}. 
 
An overview of these results is given in \textcite{guo2022optimal}. It is worth noting that whilst the link with optimal transport was not recognised, a variational approach to calibrating local volatility models was first constructed already in \textcite{avellaneda1997calibrating}. The core contribution of the method is to build a fully calibrated model while trying to preserve desirable features of a given model. In effect, we project a given reference model onto the set of calibrated models. As the calibration constraints depend on one-dimensional marginals, classical mimicking results, see \textcite{gyongy1986mimicking} and \textcite{brunick2013mimicking}, allow us to restrict to Markovian models. This in turn allows us to use PDE methods to solve the dual problem. 

The need for a calibrated interest rate model is common to all financial products involving future payments. While in a very low interest rate environment, which roughly held in the financial markets between 2009 and 2021, one may be tempted to ignore this need for short-dated products, it is no longer feasible for the current market conditions. 
It is therefore natural to extend the semimartingale optimal transport approach to a setup that includes stochastic rates.
Our contribution here is to fill this gap and understand how to develop and calibrate joint models for rates and equities. This extension is not trivial, since the computation of the discount factor involves the whole path of the short rate, which renders the problem path dependent. We will show how to overcome this difficulty. We develop suitable duality results and seek to calibrate a stock price local volatility model with volatility coefficient depending on time, the underlying and the short rate process, and driven by a Brownian motion which can be correlated with the randomness driving the rates process. In a sequel paper \textcite{joseph2023joint}, we consider a simultaneous joint calibration problem. A particular difficulty here is in dealing with the path dependent discount terms while keeping the number of state variables, and thus the dimension of the problem, at $d=2$. This is important for computational reasons as the numerical methods rely on solving a non-linear HJB equation and pricing PDEs, which becomes computationally more difficult by standard techniques when the dimension $d\geq 3$. However, our duality result covers any interest rate model, at the expense of harder numerical complexity. 

\section{Preliminaries and Notation}
We adopt the setup of \textcite{guo2022calibration} and \textcite{guo2022joint}, who in turn used the formulation of \textcite{tan2013optimal}. Let $E$ be a Polish space equipped with the Borel $\sigma$-algebra, let $C(E)$ be the space of continuous functions on $E$ and $C_b(E)$ be the space of continuous bounded functions on $E$. Let $\mathcal{M}(E)$ be the space of finite signed Borel measures endowed with the weak-$*$ topology, let $\mathcal{M}_+(E)\subset\mathcal{M}(E)$ be the subset of non-negative finite Borel measures, and $\cP(E)$ be the set of Borel probability measures also under the weak-$*$ topology. Note that if $E$ is compact, then the topological dual of $C_b(E)$ is given by $C_b(E)^*=\mathcal{M}(E)$, but if $E$ is non-compact, then $C_b(E)^*$ is larger than $\mathcal{M}(E)$. Let $\mathrm{BV}(E)$ be the set of bounded variation functions on $E$ and $L^1(\diff\mu)$ be the space of $\mu$-integrable functions. For unambiguity we write $C_b(E;\IR^d)$, $\mathcal{M}(E;\IR^d)$, $\mathrm{BV}(E;\IR^d)$, and $L^1(\diff\mu;\IR^d)$ for the vector valued versions of those spaces (with an analogous definition for the matrix valued versions). 
Write $\IS^d$ for the set of $d\times d$ symmetric matrices and $\IS^d_+\subset\IS^d$ as the subset of positive semidefinite symmetric matrices. For $a,b\in\IR^d$ write $a\cdot b$ for the inner product $a^{\intercal}b$ and for $A,B\in\IS^d$ write $A:B$ for their inner product $\mathrm{Tr}(A^{\intercal}B)$. As a shorthand, we define $\Lambda\coloneqq [0,T]\times\IR^d$ and $\cX\coloneqq\IR\times\IR^d\times\IS^d$, which will be used for the domain and range of the triple representing the law of the semimartingale, the drift and the volatility. Finally, denote the duality bracket between $C_b(E)$ and $C_b(E)^*$ by $\langle\cdot ,\cdot\rangle$.
 
We fix a time horizon $T>0$ and dimension $d>1$. We mainly work on the canonical space $\Omega\coloneqq C([0,T],\IR^d)$ of continuous $\IR^d$-valued paths on $[0,T]$, but sometimes we need to work on $C([0,T],\IR^{d+1})$.
The canonical process is denoted in two ways: either $e$, $e_t(\omega)=\omega_t$, and the dimension is clear from the context, or as $X$, $X_t(\omega)=\omega_t$, with the latter restricted to the $d$-dimensional setting. We take the canonical filtration $\IF=(\cF_t)_{0\leq t\leq T}$ generated by $e$. 
We consider all probability measures $\IP$ on $(\Omega,\cF_T)$ such that $X$ is an $(\IF,\IP)$-semimartingale with decomposition    
\begin{equation*}
X_t=X_0+A_t^{\IP}+M_t^{\IP},\quad t\in[0,T],\quad\IP-\text{a.s.},
\end{equation*}
where $(M_t^{\IP})_{t\geq 0}$ is an $(\IF,\IP)$-martingale and $(A_t^{\IP})_{t\in [0,T]}$ is a finite variation process, both are absolutely continuous relative to the Lebesgue measure and can be characterised in the following sense.
\begin{definition}
We say that $\IP$ is characterised by $(\alpha_t^{\IP},\beta_t^{\IP})$ if
\begin{equation*}
\alpha_t^{\IP}=\frac{\mathrm{d}A_t^{\IP}}{\diff t},\qquad\beta_t^{\IP}=\frac{\mathrm{d}\langle M^{\IP}\rangle_t}{\diff t}\quad \diff t\times \IP(\diff\omega)\text{-a.e.},
\end{equation*}
where $(\alpha_t,\beta_t)_{t\in [0,T]}$ is an $\IR^d\times\IS^d$-valued, progressively measurable process.
\end{definition}
The existence of progressively measurable $(\alpha_t,\beta_t)$ is guaranteed since $A_t^{\IP}$ and $M_t^{\IP}$ are assumed to be Lebesgue absolutely continuous in $t$, $\IP$-a.s., see \textcite[Proposition~I.3.13, Proposition~II.2.9]{jacod2013limit}. 
Note that $(\alpha_t^{\IP},\beta_t^{\IP})$ is only determined up to $\diff\IP\times\diff t$-almost everywhere. The set of probability measures $\IP$ satisfying the conditions above is denoted $\cP$. We note that regular conditional probabilities exist on $\Omega$ and we will use these implicitly, e.g., $\IE^{\IP}_{t,x}[\alpha_t^{\IP}]$ or $\IE^{\IP}[\alpha_t^{\IP}| X_t=x]$ will denote the conditional expectation seen as a measurable function of $(t, X_t)$ and evaluated at $X_t=x$. Finally, we note that, by Doob's martingale representation Theorem (see \textcite[Theorem~3.4.2]{karatzas2014brownian}), possibly on an enlarged probability space, there exists a Brownian motion $W^{\IP}$ such that 
\begin{equation*}
X_t=X_0+\int_0^t\alpha_s^{\IP}\diff s+\int_0^t(\beta_s)^{1/2}\diff W^{\IP}_s,\quad t\leq T.
\end{equation*}

The process $X_t\coloneqq (\mathbf{S}_t,X^r_t)$, is composed of the short rate $(X^r_t)_{t\in [0,T]}$ and a 
$(d-1)$-dimensional process $(\mathbf{S}_t)_{t\in [0,T]}$ corresponding to the underlying asset, such as the S{\&}P 500, and extra state variables, e.g., extra assets, stochastic factors in the volatility functions, or multi-factors in the short rate. 
In specific examples, e.g., in section \ref{sec:2dsetup}, the short rate is also denoted $r_t$. 
The state variable's $x$-coordinate corresponding to $X^r$ is denoted $x_r$ to stress that it represents the short rate.
The stochastic discount factor is derived from the short rate, $Y_t:=\exp(-\int_0^tX^r_s\,\diff s)$, and we refer to $(X_t,Y_t)$ as the augmented process. 
Note that the augmented process can be expressed as an updating function in the sense of \textcite[Definition~3.1]{brunick2013mimicking}.
We further consider the subset $\cP^1\subset\cP$ of measures $\IP\in \cP$ which satisfy
\begin{equation}\label{eq: char integrability}
\IE^{\IP}\Big[\int_0^T\!|\alpha_t^{\IP}|+|\beta_t^{\IP}|\,\diff t\Big]<+\infty\quad\text{and}\quad X^r_t>-1 \diff t\otimes \diff \IP\text{-a.e.,}
\end{equation}
where $|\cdot |$ is the $L^1$ norm. We are not aware of any relevant model which would require the short rate to be unbounded from below so this part of the assumption poses no problem. It implies in particular that $0<Y_t<\mathrm{e}^T$, $t\in [0,T]$, under all measures considered in this paper. 
The integrability assumption would equally be satisfied under any reasonable model and it is imposed so that we can apply \textcite[Theorem~3.6]{brunick2013mimicking}. These mimicking results, extending earlier works of \textcite{krylov1984once} and \textcite{gyongy1986mimicking}, allow us to construct a Markov process with the same one-dimensional marginals as a given semimartingale. As our constraints and cost will only depend on these marginals, it will allow us restrict our attention to Markov processes. Markovian projection relies on the classical result that the marginal law of a diffusion process is a distributional solution to the corresponding Fokker-Planck equation, with the converse result given in \textcite{figalli2008existence} where existence and uniqueness results are constructed for the corresponding SDE satisfied by a process with marginal law that is a weak solution to a Fokker-Planck equation. In addition, since we are considering a calibration problem with a stochastic interest rate, to avoid path dependent terms, we augment the process to include stochastic discount factor terms. For simplicity, our argument will consider a one-factor short rate, but it is easy to adapt to a multi-factor or multi-curve setting. 
\begin{theorem}\label{lem: Markovian projection}
Let $\IP\in \cP^1$ be a candidate model. There exist jointly measurable versions of the conditional expectations, $\alpha_t(x,y)=\IE^{\IP}_{t,x,y}[\alpha_t^{\IP}]$, $\beta_t(x,y)=\IE^{\IP}_{t,x,y}[\beta_t^{\IP}]$, $\diff t\times \diff \IP$-a.e., and $\tilde \IP \in \cP^1$ such that $\alpha^{\tilde \IP}_t = \alpha_t(X_t,Y_t)$ and $\beta_t^{\tilde \IP} = \beta_t(X_t,Y_t)$, $\diff t \times \diff \tilde \IP$ - a.e., and $\tilde \IP\circ (X_t,Y_t)^{-1}=\IP\circ (X_t,Y_t)^{-1}$. \\
Moreover, possibly on some other probability space, there exists a Markov process $(X,Y)$ satisfying
\begin{equation}\label{eq: MP SDE}
\begin{cases}
\diff X_t=\alpha_t(X_t,Y_t)\diff t+[\beta_t(X_t,Y_t)]^{\frac{1}{2}}\diff W_t^{\IP'},&0\leq t\leq T,\\
\diff Y_t=-Y_tX^r_t\diff t,&0\leq t\leq T,\\
X_0=x_0,&\\
Y_0=1.&
\end{cases}
\end{equation}
such that $(X_t,Y_t)\sim \IP\circ (X_t,Y_t)^{-1}$ for all $t\in [0,T]$, and where $W^{\IP'}$ is a Brownian motion.\\
Finally, the marginal distributions, $\rho_t^{}=\rho^{}(t,\cdot)=\IP\circ (X_t,Y_t)^{-1}$, are a weak solution to the Fokker-Planck equation
\begin{equation}\label{eq: MP fokker planck}
\begin{cases}
\partial_t\rho^{}_t+\nabla_x\cdot(\rho_t^{}\alpha_t)-\frac{1}{2}\nabla^2_x :(\rho_t^{}\beta_t) - \partial_y(x_ry\rho^{}_t)=0,&\mathrm{on}\:[0,T]\times\IR^{d+1},\\
\rho_0^{\IP}=\delta_{X_0,Y_0},&(x,y)\in\IR^{d+1}.
\end{cases}
\end{equation}
\end{theorem}
\begin{definition}
 We let $\cP^1_{\mathrm{loc}}$ denote the subset of measures in $\cP^1$ under which the semimartingale characteristics of $X$ are given as measurable functions in $(t,X_t,Y_t)$. 
\end{definition}
\begin{remark}\label{rk:spacesrk}
The first assertion in Theorem \ref{lem: Markovian projection} follows from \textcite[Theorem~7.1]{brunick2013mimicking}. The second one then follows via the martingale representation theorem, and is Theorem~3.6 therein. The last assertion is obtained using It\^o's formula. Note that $\tilde \IP\in  \cP^1_{\mathrm{loc}}$ and is simply the distribution of $X$ solving \eqref{eq: MP SDE}, $\tilde \IP = \IP'\circ X^{-1}$. With a slight abuse of language, we will refer to $\cP^1_{\mathrm{loc}}$ as measures under which $(X,Y)$ is a Markov process solving \eqref{eq: MP SDE}. 
\end{remark}
\begin{remark}
A Borel curve $(\nu_t)_{t\in[0,T]}\subset\cM(\IR^{d})$ is narrowly continuous if for every $f\in C_b(\IR^d)$, the map $t\mapsto \int_{\IR^d}f\diff \nu_t$ is continuous. \textcite[Remark~2.3]{trevisan2016well} (also \textcite[Lemma~8.1.2]{ambrosio2005gradient}) show that any solution $(\rho_t)_{t\in (0,T)}$ of \eqref{eq: MP fokker planck} has a narrowly continuous representative $(\tilde{\rho}_t)_{t\in[0,T]}$ such that $\rho_t=\tilde{\rho}_t$ for almost all $t\in (0,T)$. Define $\mathcal{L}$ as the diffusion operator associated with \eqref{eq: MP fokker planck}, then we also have
$$
\int f_{t_2}\diff\tilde{\rho}_{t_2}-\int f_{t_1}\diff\tilde{\rho}_{t_1}=\int_{t_1}^{t_2}\int\partial_tf+ (\mathcal{L} f)_t\diff\tilde{\rho}_t\diff t,\quad\text{for }0\leq t_1<t_2\leq T.
$$
We therefore may assume without loss of generality that the solution to \eqref{eq: MP fokker planck} is narrowly continuous.
\end{remark}
 Our calibration problem will be written as a minimization of a cost functional which represents a ``distance'' to a given favourite reference model and also ensures perfect calibration. We consider a strongly convex cost function $F:\Lambda\times\IR^d\times\IS^d\to[0,+\infty]$, proper and lower semicontinuous in $(\alpha,\beta)$. It will be set to take value $+\infty$ outside of a set $\Gamma$, see for example \eqref{eq: gamma main case}. In particular we will always set $F=+\infty$ if $\beta\not\in\IS^d_+$ to ensure $\beta$ is a legitimate covariance matrix. We use these properties implicitly, e.g., whenever we restrict to $\beta\in\IS^d_+$. We also define the function $\hat{F}:\Lambda\times\IR\times\IR^d\times\IS^d\to [0,+\infty]$ as $\hat{F}(t,x,y,\alpha,\beta)=f(y)F(t,x,\alpha,\beta)$, where the function $f$ is specified by the discounting term in the payoffs of the options constraints.
 
The strong convexity assumption of $F$ (see \textcite[Definition~2.1.3]{nesterov2018lectures}) means that for any subderivative $\nabla$ performed over $(\alpha,\beta)\in\IR^d\times\IS^d_+$ there exists $C>0$ such that for all $(t,x,\alpha,\beta,\alpha',\beta')\in\Lambda\times\IR^d\times\IS^d_+\times\IR^d\times\IS^d_+$, when $F(t,x,\alpha,\beta) <\infty$ we have
\begin{align}
F(t,x,\alpha',\beta')\geq F(t,x,\alpha,\beta)&+\langle\nabla F(t,x,\alpha,\beta),(\alpha'-\alpha,\beta'-\beta)\rangle\notag\\
&+C(||\alpha-\alpha'||_2^2+||\beta-\beta'||_{\mathrm{Fro}}^2).\label{eq: strong convexity}
\end{align}
Here $||\cdot||_{\mathrm{Fro}}$ denotes the Frobenius norm, which for a matrix $M$ is given by $||M||_{\mathrm{Fro}}=\sqrt{\sum_{i,j}|m_{i,j}|^2}$. We additionally assume that $F$ is $p$-coercive, that is there exists $p>1$ and $C>0$ such that for all $(t,x,\alpha,\beta)\in\Lambda\times\IR^d\times\IS^d_+$ we have
\begin{equation*}
||\alpha||^p+||\beta||^p\leq C[1+F(t,x,\alpha,\beta)].
\end{equation*}
The Legendre-Fenchel transform of $F$ (see for example \textcite[\S 12]{rockafellar1970convex}) is given by 
\begin{equation*}
F^*(t,x,a,b)\coloneqq\sup_{\alpha\in\IR^d,\beta\in\IS_+^d}\{\alpha\cdot a + \beta : b-F(t,x,\alpha,\beta)\},
\end{equation*}
where the supremum is a priori over $(\alpha,\beta)\in \IR^d\times\IS^d$ but can be restricted  $\beta\in\IS^d_+$ by the comment above, or indeed can be later restricted to the set $\Gamma$ as $F=+\infty$ elsewhere. While $F$ itself may not be differentiable with respect to $(\alpha,\beta)$, since $F$ is strictly convex in $(\alpha,\beta)$ we therefore have that $F^*(t,x,a,b)$ is differentiable in $(a,b)$ (see \textcite[Theorem~26.3]{rockafellar1970convex}). For convenience, we will denote $F(\alpha,\beta)\coloneqq F(t,x,\alpha,\beta)$ and $F^*(a,b)\coloneqq F^*(t,x,a,b)$. 

\section{Problem formulation}

We want to calibrate our model to $n$ market prices of options, the $i^{\mathrm{th}}$ option has maturity $\tau_i\in (0,T)$, payoff $G_i\in C_b(\IR^d;\IR)$ and price $u_i$. While European call option payoffs do not satisfy boundedness, as calibrating instruments they are equivalent to put options, via the usual call-put parity, which have bounded and continuous payoffs. We let $\tau=[\tau_1,\dots,\tau_n]$, $G(x)=[G_1(x),\dots,G_n(x)]$ and  $u=[u_1,\dots,u_n]$. We are only interested in modelling on the horizon covered by the market instruments, so we assume that $T=\max_{i\leq n} \tau_i$.

\begin{definition}
Given an initial distribution $\mu_0$, expiry times $\tau$, market prices $u$ corresponding to payoffs $G$, we introduce the set of calibrated measures 
\begin{equation*}
\cP(\mu_0,\tau,u)=\{\IP\in\cP^1 :\:\IP\circ X_{0}^{-1}=\mu_0,\:\IE^{\IP}[Y_{\tau_i}G_i(X_{\tau_i})]=u_i\text{ for }i=1,\dots,n\}.
\end{equation*}
We denote $\cP_{\mathrm{loc}}(\mu_0,\tau,u)=\cP(\mu_0,\tau,u)\cap \cP^1_{\mathrm{loc}}$ the calibrated models under which the semimartingale characteristics of $X$ are given as measurable functions of $(t,X_t,Y_t)$, see Remark \ref{rk:spacesrk}. \\
We denote $\widetilde{\cP}_{\mathrm{loc}}(\mu_0,\tau,u)\subseteq \cP_{\mathrm{loc}}(\mu_0,\tau,u)$ the subset of measures under which the semimartingale characteristics of $X$ are given as measurable functions of $(t,X_t)$. 
\end{definition}
We remark that we will usually take $\mu_0 = \delta_{(X_0,1)}$ where $X_0$ are the observed initial values of our state variables.

Our primal problem consists of selecting one of the possible calibrated models in $\cP(\mu_0,\tau,u)$ by minimising a cost functional. The following key results asserts that when doing so, we can in fact restrict to Markov processes.
\begin{proposition}\label{prop: localised problem}
Given an initial distribution $\mu_0$, expiration times $\tau$, and market prices $u$, we have
\begin{equation}\label{eq: localised problem}
V\coloneqq\inf_{\IP\in\cP(\mu_0,\tau,u)}\IE^{\IP}\bigg[\int_{0}^T Y_tF(\alpha_t^{\IP},\beta_t^{\IP})\,\diff t\bigg]=\inf_{\IP\in\widetilde{\cP}_{\mathrm{loc}}(\mu_0,\tau,u)}\IE^{\IP}\bigg[\int_{0}^T Y_t F(\alpha^{\IP}_t,\beta^{\IP}_t)\,\diff t\bigg].
\end{equation}
\end{proposition}

\subsection{First Markovian reduction}

Similarly to \textcite[Proposition~3.4]{guo2022calibration}, since market constraints only depend on marginal distributions and the cost functional is convex, a combination of Markovian projection in Theorem \ref{lem: Markovian projection} and Jensen's inequality readily shows that we can restrict our attention to Markov process in $(t,X_t,Y_t)$. 
\begin{proof}[of Proposition \ref{prop: localised problem} (Part I)]
Note that $\widetilde{\cP}_{\mathrm{loc}}(\mu_0,\tau,u)\subseteq\cP(\mu_0,\tau,u)$ so the ``$\leq$" is trivial and also if $\cP(\mu_0,\tau,u)$ is empty then both sides are equal $+\infty$. 
We show now the reverse inequality but with $\cP_{\mathrm{loc}}(\mu_0,\tau,u)$ replacing $\widetilde{\cP}_{\mathrm{loc}}(\mu_0,\tau,u)$. 
Take $\IP\in\cP(\mu_0,\tau,u)$ and use Theorem \ref{lem: Markovian projection}, and Remark \ref{rk:spacesrk}, to find the corresponding $\tilde \IP\in\cP_{\mathrm{loc}}(\mu_0,\tau,u)$ such that $(X,Y)$ has the same marginals under $\IP$ and $\tilde \IP$. Using the tower property and Jensen's inequality as $(\alpha,\beta)\mapsto F(\alpha,\beta)$ is convex, we have
\begin{align}
\IE^{\IP}\bigg[\int_{0}^T Y_t F(\alpha_t^{\IP},\beta_t^{\IP})\,\diff t\bigg]&=\IE^{\IP}\bigg[\int_{0}^T Y_t \IE_{t,X_t,Y_t}^{\IP}[F(\alpha_t^{\IP},\beta_t^{\IP})]\,\diff t\bigg]\notag\\
&\geq \IE^{\IP}\bigg[\int_{0}^T Y_t F(\IE^{\mathbb{P}}_{t,X_t,Y_t}[\alpha^{\IP}_t],\IE_{t,X_t,Y_t}^{\IP}[\beta_t^{\IP}])\,\diff t\bigg]\notag\\
&=\IE^{\tilde \IP}\bigg[\int_{0}^T Y_t F(\alpha^{\tilde \IP}_t,\beta^{\tilde \IP}_t)\,\diff t\bigg].\notag
\end{align}
This gives the desired inequality.
\end{proof}
\subsection{A `discounted density' transformation and superposition principle}

While we could easily reduce to Markovian models in the state process $(X,Y)$, this is not satisfactory. In particular, in Section~\ref{sec: numerical results} where we consider a local volatility model with a short rate, we would obtain a three dimensional fully nonlinear PDE instead of a two dimensional one, which involves a substantially increased computational effort. We thus want to further reduce to Markov process $X$, i.e., to obtain \eqref{eq: localised problem}. This requires novel arguments to deal with the stochastic discount term. Instead of working with probability measures, we will work with discounted densities, i.e., sub-probability measures. 
When a $\IP\in \cP^1_{loc}$ is fixed, we write $\rho, \alpha, \beta$, for $\rho^\IP, \alpha^\IP, \beta^\IP$. 
\begin{definition}\label{def: discounting}
Let $\IP\in \cP^1_{\mathrm{loc}}$ with the semimartingale characteristics of $X$ given by $\alpha^{\IP}_t = \alpha_t(X_t,Y_t)$, $\beta_t^{\IP}= \beta_t(X_t,Y_t)$. 
Let $\rho_t(\diff x,\diff y)$ and $\rho_t^X(\diff x)$ be, respectively, the marginal distribution of $(X_t,Y_t)$ and of $X_t$, $t\in [0,T]$. Let $\rho_t(\diff x,\diff y) = \zeta_{t,x}(\diff y) \rho_t^X(\diff x)$, i.e., $\zeta_{t,x}$ is the law of $Y_t$ conditional on $\{X_t=x\}$. 
Define the `discounted' density, drift and volatility for $(t,x)\in[0,T]\times\IR^d$
\begin{align}
\rho^D_t(\diff x)&= D_t(x)\rho_t^X(\diff x),\qquad \textrm{where}\quad D_t(x)=\int_{\IR} y\zeta_{t,x}(\diff y),\label{eq: dd substitution}\\
\alpha^D_t(x)&=\frac{\IE[Y_t\alpha_t(X_t,Y_t)|X_t=x]}{\IE[Y_t|X_t=x]} = \int_{\IR} y\alpha_t(x,y) \frac{\zeta_{t,x}(\diff y)}{D_t(x)},\label{eq: drift substitution}\\
\beta^D_t(x)&=\frac{\IE[Y_t\beta_t(X_t,Y_t)|X_t=x]}{\IE[Y_t|X_t=x]}= \int_{\IR} y\beta_t(x,y) \frac{\zeta_{t,x}(\diff y)}{D_t(x)}.\label{eq: vol substitution}
\end{align}
\end{definition}
By construction, $\rho^D_t$ inherits narrow continuity from $\rho_t$. In addition, using \eqref{eq: char integrability} and \eqref{eq: dd substitution}--\eqref{eq: vol substitution}, we have 
\begin{align}
 \int_0^T\int |\alpha^D(x)| + |\beta^D(x)| \rho^D_t(\diff x)\diff t &=  \int_0^T\int |y\alpha(x,y)| + |y\beta(x,y)| \rho_t(\diff x,\diff y)\diff t \notag\\
 &\leq  e^T\IE^{\IP}[\int_0^T\!|\alpha_t^{\IP}|+|\beta_t^{\IP}|\,\diff t]< +\infty,\label{eq:rhoDintegrability}
\end{align}
 so that $\alpha^D$ and $\beta^D$ are $\diff t \times \rho^D$-integrable. 
 
 A key tool in the dimension reduction is a ``discounted'' version of the superposition principle of \textcite[Theorem~2.5]{trevisan2016well}. We first define the augmented martingale problem.
\begin{definition}
Let $f\in C^{1,2}_b([0,T]\times\IR^{d+1};\IR)$, and given measurable functions $a:[0,T]\times\IR^d\to\IR^d$, $b:[0,T]\times\IR^d\to\IS^d_+$, and $c:[0,T]\times\IR^d\to\IR$, define the operator $$\mathcal{L}f=a\cdot\nabla_xf + \frac{1}{2}b:\nabla^2_xf-yc\partial_yf. $$ 
Then $\boldsymbol{\eta}\in \cP\big(C([0,T];\IR^{d+1})\big)$ is a solution to the augmented martingale problem, $\mathrm{aMP}(a,b,c)$ if
\begin{equation}\label{eq:integrabilityaMP}
\int \int_0^T(|a_t|\circ e_t + |b_t|\circ e_t +|c_t|\circ e_t)\diff t\diff\boldsymbol{\eta}<\infty
\end{equation}
and for all $f\in C^{1,2}_b([0,T]\times\IR^{d+1};\IR)$, the process
\begin{equation*}
[0,T]\ni t\mapsto f(t,\cdot)\circ e_t - \int_0^t[\partial_tf(s,\cdot) + \{\mathcal{L} f\}(s,\cdot)]\circ e_s\diff s
\end{equation*}
is a martingale with respect to the natural filtration on $C([0,T];\IR^{d+1})$.
\end{definition}
We now state the ``discounted superposition principle''.
\begin{theorem}\label{thm: discounted superposition}
Let $(\nu_t)_{t\in [0,T]} \in C\big([0,T];\cM(\IR^d)\big)$ be a narrowly continuous solution of the general discounted Fokker-Planck equation on $[0,T]\times \IR^d$
\begin{equation*}
\partial_t\nu_t+\nabla_x\cdot[a_t(x)\nu_t(x)]-\frac{1}{2}\nabla^2_x:[b_t(x)\nu_t(x)]+c_t(x)\nu_t(x)=0,
\end{equation*}
where $a\in L^1(\diff|\nu_t|\diff t;\IR^d)$, $b\in L^1(\diff|\nu_t|\diff t;\IS^d_+)$, and $c\in L^1(\diff|\nu_t|\diff t;\IR)$. Then there exists $\boldsymbol{\eta}\in \cP(C([0,T];\IR^{d+1}))$ a solution to the $\mathrm{aMP}(a,b,c)$, such that the discounted version of its narrowly continuous curve of marginals $\eta_t=\boldsymbol{\eta}\circ e_t^{-1}$ coincides with $\nu_t$, that is $\int_{\IR}y\eta_t(\cdot,\diff y)=\nu_t$ for all $t\in[0,T]$.
\end{theorem}
The proof of Theorem~\ref{thm: discounted superposition} follows closely the steps in the proof of \textcite[Theorem~2.5]{trevisan2016well}, which itself has a similar structure to \textcite[Theorem~8.2.1]{ambrosio2005gradient}, \textcite[Theorem~4.5]{ambrosio2009flows}, \textcite[Theorem~2.6]{figalli2008existence}, \textcite[Theorem~7.1]{ambrosio2014well}. 
We defer the proof to the Appendix but summarise here the main steps.  
First, in section~\ref{sec: smooth approx}, we argue the result holds for smooth coefficients and constructs smooth approximating sequences in an analogous way to \textcite{trevisan2016well}. Then, Section~\ref{sec: superposition convergence} asserts the compactness of solutions for the augmented martingale problem as a direct consequence of \textcite[Section~A.2]{trevisan2016well}, and establishes convergence via an easy modification of \textcite[Section~A.3]{trevisan2016well}. For the generalisation to smooth and bounded coefficients, the adaptation of the proof is easy with the existence of a solution to $\mathrm{aMP}(a,b,c)$ with a given initial law following directly from the \textcite[Theorem~A.6]{trevisan2016well}, and uniqueness following from a simple adaptation of \textcite[Theorem~A.7]{trevisan2016well}, with the approach in obtaining estimates an easy modification. The generalisation to bounded coefficients relies on the de la Vall\'{e}e Poussin criterion and \textcite[Corollary~A.5]{trevisan2016well}, with our modification again being the handling of the $c$ term --- here we rely on the already established tightness and superposition results for the augmented martingale problem, as it is just the usual martingale problem for $(X,Y)$, and the fact that $Y$ is strictly positive. The estimates for the $x$-coordinates are obtained in the same way as \textcite{trevisan2016well}, and the estimates for the $y$-coordinate use the function $\log(y\chi_R)$ where $\chi_R$ is some cutoff function in order to eliminate the $y$ term in the augmented diffusion operator. The locally bounded coefficients case then follows a similar modified approach to the bounded coefficients case, as in \textcite{trevisan2016well}.
\subsection{Second Markovian reduction}
We use now the tools introduced above to finish the proof of Proposition \ref{prop: localised problem}. 

\begin{proposition}\label{prop:discountedFP}
Let $\IP\in \cP^1_{loc}$ and $\rho^D,\alpha^D,\beta^D$ be given as in Definition~\ref{def: discounting}. Then $\rho^D_t$ is a weak solution to the `discounted' version of the Fokker-Planck equation for $(t,x)\in[0,T]\times\IR^d$
\begin{equation}\label{eq: discount fp}
\partial_t\rho^D_t(x)+\nabla_x\cdot[\alpha^D_t(x)\rho^D_t(x)]-\frac{1}{2}\nabla_x^2:[\beta^D_t(x)\rho^D_t(x)]+x_r\rho^D_t(x)=0. 
\end{equation}
Moreover, there exists a local volatility model $\tilde{\IP}\in\cP^1$ under which the semimartingale characteristics of $X$ are given by $\alpha^{\tilde \IP}_t = \alpha^D_t(X_t)$ and $\beta^{\tilde \IP}_t = \beta^D_t (X_t)$ $\diff t\times \diff \tilde \IP$- a.e. and such that for any payoff function $G\in C_b(\IR^d;\IR)$ and any $t\in[0,T]$
\begin{equation}\label{eq: preserve price}
\IE^{\IP}[Y_tG(X_t)] = \IE^{\tilde{\IP}}[Y_tG(X_t)].
\end{equation}
Furthermore, the value of the objective function decreases:
\begin{equation*}
\IE^{\IP}\Big[\int_0^TY_tF\big(\alpha_t(X_t,Y_t),\beta_t(X_t,Y_t)\big)\diff t\Big]\geq \IE^{\tilde{\IP}}\Big[\int_0^T{Y}_tF\big(\alpha^D_t({X}_t),\beta^D_t({X}_t)\big)\diff t\Big].
\end{equation*}
\end{proposition}
The notion of solution to \eqref{eq: discount fp} is in the spirit of \textcite[Definition~2.2]{trevisan2016well} but with an extra $-x_r f$ term in the diffusion operator of \textcite[Definition~2.1]{trevisan2016well}. 
\begin{proof}[of Proposition \ref{prop: localised problem} (Part II)] Proposition \ref{prop:discountedFP} instantly gives us that 
 \begin{equation*}
\inf_{\IP\in{\cP}_{\mathrm{loc}}(\mu_0,\tau,u)}\IE^{\IP}\bigg[\int_{0}^T Y_tF(\alpha_t^{\IP},\beta_t^{\IP})\,\diff t\bigg]\geq \inf_{\IP\in\widetilde{\cP}_{\mathrm{loc}}(\mu_0,\tau,u)}\IE^{\IP}\bigg[\int_{0}^T Y_tF(\alpha_t^{\IP},\beta_t^{\IP})\,\diff t\bigg],
\end{equation*} 
and hence we have equality since $\widetilde{\cP}_{\mathrm{loc}}(\mu_0,\tau,u)\subseteq {\cP}_{\mathrm{loc}}(\mu_0,\tau,u)$, which completes the proof of Proposition \ref{prop: localised problem}.
\end{proof}

\begin{proof}[of Proposition~\ref{prop:discountedFP}]
Recall the notation of Definition~\ref{def: discounting}. Since $\rho_t$ solves \eqref{eq: MP fokker planck}, we have for $(t,x)\in (0,T)\times\IR^d$
\begin{align*}
\int_{\IR}y\big(\partial_t\rho_t(x,y) + \nabla_x\cdot[\rho_t(x,y)\alpha_t(x,y)]&-\frac{1}{2}\nabla^2_x:[\rho_t(x,y)\beta_t(x,y)]\\&-\partial_y[x_ry\rho_t(x,y)]\big)\diff y=0.
\end{align*}
By construction, the first term is given by $\int_{\IR}y\partial_t\rho_t(x,y)\diff y=\partial_t\rho^D_t(x)$. Then, for the drift we have
\begin{align}
\int_{\IR}y\nabla_x\cdot[\rho_t(x,y)\alpha_t(x,y)]\diff y &= \nabla_x\cdot \Big[\int_{\IR}\alpha_t(x,y)\frac{y}{D_t(x)}\zeta_{t,x}(y)\rho^X_t(x)D_t(x)\diff y\Big],\notag\\
&=\nabla_x\cdot\Big[\frac{\int_{\IR}\alpha_t(x,y)y\zeta_{t,x}(y)\diff y}{D_t(x)}\rho^X_t(x)D_t(x)\Big],\notag\\
&=\nabla_x\cdot[\rho^D_t(x)\alpha^D_t(x)].\notag
\end{align}
The same calculation gives us
\begin{equation*}
\int_{\IR}y\nabla^2_x:[\rho_t(x,y)\beta_t(x,y)]\diff y=\nabla^2_x:[\rho^D_t(x)\beta^D_t(x)].
\end{equation*}
Finally, integration by parts gives $\int_{\IR}y\partial_y[x_ry\rho_t(x,y)]\diff y = -x_r\rho^D_t(x)$, so we have that $\rho^D_t$ solves \eqref{eq: discount fp}. Recall also the integrability condition \eqref{eq:rhoDintegrability} holds. We can thus apply Theorem~\ref{thm: discounted superposition} with $\nu=\rho^D$ to obtain another probability measure, $\tilde{\boldsymbol{\rho}}\in \cP\big(C([0,T];\IR^{d+1})\big)$ solving $\mathrm{aMP}(\alpha^D,\beta^D,x_r)$ such that $\tilde{\rho}^D_t(\cdot) = \int_{\IR}y\tilde{\rho}_t(\cdot,\diff y)=\rho^D(\cdot)$. 
By \textcite[Lemma~3.4]{guo2021path}, for the canonical process $(\tilde X, \tilde Y)$ under $\tilde{\boldsymbol{\rho}}$, 
$\tilde X$ is a local volatility process with the semimartingale characteristics $\alpha^D_t(\tilde X_t)$ and $\beta^{\tilde \IP}_t = \beta^D_t (\tilde X_t)$, and further $\tilde Y_t=\tilde Y_0\exp\big(-\int_0^t \tilde X^r_s\diff s\big)$, $\diff t\times \diff \tilde{\boldsymbol{\rho}}$-a.e. We let $\tilde\IP$ be the projection of $\tilde{\boldsymbol{\rho}}$ on the first $d$ coordinates. Note that $\tilde \IP\in \cP^1$ by \eqref{eq:integrabilityaMP}, $X$ has the desired semimartingale characteristics under $\tilde \IP$ and also $\tilde \rho_t$ is the distribution of $(X_t,Y_t)$ under $\tilde \IP$. 
Then, given a payoff function $G\in C_b(\IR^d;\IR)$, we have for any fixed $t\in[0,T]$
\begin{align*}
\IE^{\IP}[Y_tG(X_t)] = 
\int_{\IR^{d+1}}yG(x)\rho_t(x,y)\diff x\diff y&=\int_{\IR^d}G(x)\rho^D_t(x)\diff x\\
&=\int_{\IR^d}G(x)\tilde{\rho}^D_t(x)\diff x\\
&=\int_{\IR^{d+1}}yG(x)\tilde{\rho}_t(x,y)\diff x\diff y\\
& = \IE^{\tilde{\IP}}[Y_tG(X_t)], 
\end{align*}
so \eqref{eq: preserve price} holds. Finally, we also have 
\begin{align}
\IE^{\IP}\bigg[\int_0^TY_t&F\big(\alpha_t(X_t,Y_t),\beta_t(X_t,Y_t)\big)\diff t\bigg]\notag\\
&=\int_0^T\int_{\IR^{d+1}}yF\big(\alpha_t(x,y),\beta_t(x,y)\big)\zeta_{t,x}(\diff y)\rho^X_t(\diff x)\diff t,\notag\\
&=\int_0^T\int_{\IR^{d+1}}F\big(\alpha_t(x,y),\beta_t(x,y)\big)\frac{y}{D_t(x)}\zeta_{t,x}(\diff y)\,D_t(x)\rho^X_t(\diff x)\diff t,\notag\\
&\geq\int_0^T\int_{\IR^d}F\bigg(\frac{\int_{\IR}y\alpha_t(x,y)\zeta_{t,x}(\diff y)}{D_t(x)},\frac{\int_{\IR}y\beta_t(x,y)\zeta_{t,x}(\diff y)}{D_t(x)}\bigg)D_t(x)\rho^X_t(\diff x)\diff t,\notag\\
&=\int_0^T\int_{\IR^d}F\big(\alpha^D_t(x),\beta^D_t(x)\big)\rho^D_t(\diff x)\diff t,\notag\\
&=\int_0^T\int_{\IR^d}F\big(\alpha^D_t(x),\beta^D_t(x)\big)\tilde{\rho}^D_t(\diff x)\diff t,\label{eq:discdensvalue}\\
&=\IE^{\tilde{\IP}}\bigg[\int_0^T{Y}_tF\big(\alpha^D_t({X}_t),\beta^D_t({X}_t)\big)\diff t\bigg].\label{eq:discdensvaluemodel}
\end{align}
\end{proof}

\section{The dual problem}
Proposition \ref{prop: localised problem} asserted that without any loss of generality, we can restrict our attention to $\IP\in \widetilde{\cP}_{\mathrm{loc}}(\mu_0,\tau,u)$. By conditioning on $X_t$, we can express the value function as an integral against the discounted density, as shown above in \eqref{eq:discdensvalue}-\eqref{eq:discdensvaluemodel}. Accordingly, 
from now on, we always work with the discounted versions of the density, drift, and volatility. We drop the superscript $D$ for notational ease. The problem we need to solve in the RHS of \eqref{eq: localised problem} is thus given equivalently as follows:
\begin{problem}\label{prob: primal}
The value function for the Primal Problem is given by
\begin{equation}\label{eq: objective function}
V:=\inf_{\rho,\alpha,\beta}\int_0^T\int_{\IR^d}\!F\big(\alpha_t(x),\beta_t(x)\big)\rho_t(\diff x)\diff t,
\end{equation}
where the infimum is taken over $(\rho,\alpha,\beta)\in C\big([0,T];\cM(\IR^d)\big)\times L^1(\diff\rho_t\diff t;\IR^d)\times L^1(\diff\rho_t\diff t;\IS^d)$, subject to the following constraints in the sense of distributions for $(t,x)\in [0,T]\times\IR^d$
\begin{align*}
\partial_t\rho_t(x)+\nabla_x[\rho_t(x)\alpha_t(x)]-\frac{1}{2}\nabla^2_x:[\rho_t(x)\beta_t(x)]+x_r\rho_t(x)&=0,\\
\text{for }i=1,\dots,n,\quad\int_{\IR^d}\!G_i(x)\rho_{\tau_i}(\diff x)&=u_i,\\
\rho_0(\cdot)&=\mu_0.
\end{align*}
\end{problem}
We note that Problem~\ref{prob: primal} is in fact an example of unbalanced optimal transport, with the source term in the continuity equation being $-x_r\rho$, see \textcite{chizat2018unbalanced,sejourne2023unbalanced}. To solve this constrained optimisation problem, we will use a duality method inspired by \textcite{huesmann2019benamou, guo2022calibration, guo2022joint}. This proof is similar in spirit to the duality argument used in \textcite[Proposition~2.7]{brenier1999minimal}, which considers dual of a relaxation of the minimal geodesic problem as an alternative approach to solving the Euler equations of an incompressible fluid. This technique had previously been applied in the proof of \textcite[Theorem~3.2]{brenier1997homogenized} when seeking for the existence variational solutions of ``homogenised vortex sheet equations''. The technique is also used to formulate the dual of a variational problem involving the Euler-Poisson system of equations in \textcite{loeper2006reconstruction}. The proof relies mainly on the Fenchel-Rockafellar duality theorem (see \textcite[Chapter~31]{rockafellar1970convex}), and an adjustment to make the problem convex. As our primal problem is quite similar to \textcite{guo2022calibration}, the approach used there can be adapted to our setting. The following result uses the notion of viscosity solution from Definition \ref{def: viscosity solution} below.
\begin{theorem}\label{thm: dual problem}
The dual expression for the value function $V$ is
\begin{equation}\label{eq: dual problem}
V=\sup_{\lambda}\Big\{\lambda\cdot u-\int_{\IR^d}\!\phi^\lambda(0,x)\,\diff\mu_0\Big\},
\end{equation}
where $\lambda\in\IR^n$ and $\phi^\lambda=\phi$ is the viscosity solution to the HJB equation
\begin{equation}\label{eq: dual HJB}
\partial_t\phi-x_r\phi+\sum_{i=1}^n\lambda_iG_i(x)\delta_{\tau_i}+F^*\bigg(\nabla_x\phi,\frac{1}{2}\nabla_x^2\phi\bigg)=0,\quad (t,x)\in [0,T]\times\IR^d,
\end{equation}
with terminal condition $\phi(T,\cdot)=0$. If $V$ is finite, then the infimum in Problem~\ref{prob: primal} is attained. If the supremum is attained for some $\lambda^*\in\IR^n$, with $\phi^*\in\mathrm{BV}([0,T];C_b^2(\IR^d))$ such that $\phi^*(T,\cdot) = 0$ solving the corresponding HJB equation, and $(\rho^*,\alpha^*,\beta^*)$ being the optimal solution of Problem~\ref{prob: primal}, then $(\alpha^*,\beta^*)$ is given by
\begin{equation*}
(\alpha^*_t,\beta^*_t)=\nabla F^*\bigg(\nabla_x\phi^*(t,\cdot),\frac{1}{2}\nabla_x^2\phi^*(t,\cdot)\bigg),\qquad\mathrm{d}\rho^*_t\diff t\text{ - almost everywhere}.
\end{equation*}
\end{theorem}
\begin{remark}
Solving the dual problem in Theorem~\ref{thm: dual problem} yields the primal optimisers $(\alpha^*,\beta^*)$. These characterise the distribution of the Markov process $(X,Y)$ solving \eqref{eq: MP SDE} with these coefficients , i.e., $\rho^*$ solves \eqref{eq: MP fokker planck}, and the associated measure  $\IP^*$ on $\Omega$,  $\rho^*_t=\IP^*\circ (X_t,Y_t)^{-1}$. We can then compute the associated discount factor $D_t^*(x)$ and apply the transformation \eqref{eq: dd substitution} to obtain that $D_t^*(x)\rho_t^*(x)$ is a solution of \eqref{eq: discount fp}. 
\end{remark}
\begin{remark}
It is not presently known if the supremum in \eqref{eq: dual problem} is attained for a generic cost function $F$ --- in general, this is an open problem in (semi)martingale optimal transport. In the static case of martingale OT, \textcite{beiglbock2019dual} show that dual attainment is delicate and depends on convexity and growth properties of the cost in the second marginal. In the Bass martingale setting of \textcite[Section~7]{mBBRd1}, that is under the assumption of irreducibility of marginals, the authors establish the attainability of the static dual corresponding to a specific weak martingale OT problem. This static problem is equivalent to the dynamic formulation of \textcite{tan2013optimal,huesmann2019benamou}, see also \textcite{joseph2023measure}.
\end{remark}
We prove the duality result above in the remainder of this section through a series of lemmas. Our first observation is that the objective function \eqref{eq: objective function} is not jointly convex in $(\rho,\alpha,\beta)$. We define the measures $\cA\coloneqq \rho\alpha$, $\cB\coloneqq\rho\beta$, so $\cA$ and $\cB$ are absolutely continuous with respect to $\rho$. Then, the objective function is convex in $(\rho,\cA,\cB)$ with constraints that are affine in $(\rho,\cA,\cB)$. This arises from the classical notion that the function $\bar{f}(z_1,z_2,z_3)\coloneqq z_3f(\frac{z_1}{z_3},\frac{z_2}{z_3})$ is convex in $(z_1,z_2,z_3)$ whenever $f$ is convex in $(z_1,z_2)$ on the set $\{z_3>0\}$. Note also that we write $\diff \rho$ for $\rho_t(\diff x)\diff t$, $\diff\cA$ for $\alpha_t(x)\rho_t(\diff x)\diff t$ and $\diff\cB$ for $\beta_t(x)\rho_t(\diff x)\diff t$. Moreover, our constraints in Problem~\ref{prob: primal} can be formulated in the weak sense as:
\begin{align}
\int_{\Lambda}\!\partial_t\phi\mathrm{d}\rho+\nabla\phi\cdot\mathrm{d}\cA+\frac{1}{2}\nabla^2\phi :\mathrm{d}\cB-x_r\phi\mathrm{d}\rho+\int_{\IR^d}\phi\mathrm{d}\mu_0&=0,\label{eq: weak constraints}\\
\int_{\Lambda}\sum_{i=1}^n\lambda_iG_i(x)\delta_{\tau_i}\mathrm{d}\rho-\sum_{i=1}^n\lambda_iu_i&=0.\notag
\end{align}
\noindent for any smooth compactly supported test function $\phi\in C_c^{\infty}(\Lambda)$ with $\phi(T,\cdot) = 0$ and $\lambda\in\IR^n$. The terminal condition on $\phi$ arises when integrating by parts to derive \eqref{eq: weak constraints}, since we need the $\rho_T(\cdot)$ boundary term to vanish as we do not have a priori knowledge of $\rho_T(\cdot)$. Therefore we can write Problem~\ref{prob: primal} as the following saddle point problem
\begin{problem}\label{prob: saddle pt}
\begin{align}
V=\inf_{\rho,\cA,\cB}\sup_{\phi,\lambda}\bigg\{&\int_{\Lambda}F\bigg(\frac{\mathrm{d}\cA}{\mathrm{d}\rho},\frac{\mathrm{d}\cB}{\mathrm{d}\rho}\bigg)\mathrm{d}\rho-\partial_t\phi\mathrm{d}\rho-\nabla\phi\cdot\mathrm{d}\cA-\frac{1}{2}\nabla^2\phi :\mathrm{d}\cB\notag\\
&+x_r\phi\mathrm{d}\rho-\int_{\IR^d}\phi\mathrm{d}\mu_0-\int_{\Lambda}\sum_{i=1}^n\lambda_iG_i(x)\delta_{\tau_i}\mathrm{d}\rho+\sum_{i=1}^n\lambda_iu_i\bigg\},\notag
\end{align}
where the infimum is taken over $(\rho,\alpha,\beta)\in C\big([0,T];\cM(\IR^d)\big)\times L^1(\diff\rho_t\diff t;\IR^d)\times L^1(\diff\rho_t\diff t;\IS^d)$, $\cA= \rho\alpha$, $\cB=\rho\beta$,  and the supremum over $(\phi,\lambda)\in C_c^{\infty}(\Lambda;\IR)\times\IR^n$.
\end{problem}
\noindent We now want to find a functional with convex conjugate equal to \eqref{eq: objective function}, and another that is the remainder of the infimum in Problem~\ref{prob: saddle pt}. To do this, we use the following terminology from \textcite{huesmann2019benamou} in the proof of the duality theorem. Denote $\mathrm{BV}_T\big([0,T];C^2_b(\IR^d)\big)$ as the set of $\phi\in\mathrm{BV}\big([0,T];C^2_b(\IR^d)\big)$ such that $\phi(T,\cdot)=0$.
\begin{definition}\label{def:represent}
We say that the triple $(\gamma,a,b)\in C_b(\Lambda;\cX)$ is represented by $(\phi,\lambda)\in\mathrm{BV}_T\big([0,T];C^2_b(\IR^d)\big)\times\IR^n$ if 
\begin{align*}
\gamma + \partial_t\phi -x_r\phi + \sum_{i=1}^n\lambda_iG_i(x)\delta_{\tau_i}&=0,\\
a+\nabla\phi&=0,\\
b+\frac{1}{2}\nabla^2\phi&=0.
\end{align*}
\end{definition}
\noindent Since $(\gamma,a,b)\in C_b(\Lambda;\cX)$, the presence of the dirac delta functions give that $t\mapsto\phi(t,\cdot)$ is of bounded variation on $[0,T]$ with jump discontinuities at $t=\tau_i$, which we denote as $\phi\in\mathrm{BV}_T\big([0,T];C^2_b(\IR^d)\big)$, since we require $\phi$ to be at least $C^2$ in space. Proceding in an analogous way to \textcite{guo2022calibration} to obtain the dual problem, first define functionals $\Phi :C_b(\Lambda;\cX)\to\IR\cup\{+\infty\}$ and $\Psi:C_b(\Lambda;\cX)\to\IR\cup\{+\infty\}$ by
\begin{eqnarray*}
\Phi(\gamma,a,b)&=&\begin{cases}
0,&\text{if $\gamma + F^*(a,b)\leq 0$,}\\
+\infty,&\text{otherwise}.
\end{cases}\\
\Psi(\gamma,a,b)&=&\begin{cases}
\int_{\IR^d}\phi(0,x)\,\mathrm{d}\mu_0-\sum_{i=1}^n\lambda_iu_i,&\begin{aligned}\!&\text{if $(\gamma,a,b)$ is represented by}(\phi,\lambda)\in\\
&\mathrm{BV}_T\big([0,T];C^2_b(\IR^d)\big)\times\IR^n,\end{aligned}\\
+\infty,&\mathrm{otherwise}.
\end{cases}
\end{eqnarray*}
\begin{lemma}
The objective function $V$ can be expressed in terms of $\Phi$ and $\Psi$ as
\begin{equation*}
V=\inf_{\rho,\cA,\cB}\{\Phi^*(\rho,\cA,\cB)+\Psi^*(\rho,\cA,\cB)\},
\end{equation*}
where the infimum is taken across $(\rho,\cA,\cB)\in C_b(\Lambda,\cX)^*$.
\end{lemma}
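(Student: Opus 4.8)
The plan is to compute the two Legendre--Fenchel conjugates $\Phi^*$ and $\Psi^*$ on the dual space $C_b(\Lambda;\cX)^*$ separately, and then observe that their sum, minimized over that space, reproduces $V$ after the reformulation $\cA=\rho\alpha$, $\cB=\rho\beta$. For $\Psi^*$: since $\Psi$ is finite only on triples $(\gamma,a,b)$ represented by some $(\phi,\lambda)\in\mathrm{BV}_T([0,T];C^2_b(\IR^d))\times\IR^n$, I would substitute the representation relations $\gamma=-\partial_t\phi+x_r\phi-\sum_i\lambda_iG_i\delta_{\tau_i}$, $a=-\nabla\phi$, $b=-\tfrac12\nabla_x^2\phi$ into the pairing $\langle(\gamma,a,b),(\rho,\cA,\cB)\rangle$. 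Collecting terms, $\Psi^*(\rho,\cA,\cB)$ becomes the supremum over $(\phi,\lambda)$ of a functional that is \emph{linear} in $(\phi,\lambda)$ whose coefficient is exactly the residual of the weak Fokker--Planck constraint \eqref{eq: weak constraints}, together with the calibration residuals and the initial-condition term $\int\phi(0,\cdot)\,\diff\mu_0$. A linear functional on a linear space has supremum $0$ if it vanishes identically and $+\infty$ otherwise; hence $\Psi^*(\rho,\cA,\cB)=0$ precisely when $(\rho,\cA,\cB)$ satisfies \eqref{eq: weak constraints}, the $n$ calibration equalities, and $\rho(0,\cdot)=\mu_0$, and $\Psi^*=+\infty$ otherwise. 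A minor point to dispatch is that the test functions admissible for $\Psi$ lie in the BV-in-time class (needed to absorb the Diracs), whereas \eqref{eq: weak constraints} is phrased with $C_c^\infty$ test functions; the two requirements coincide by a routine density/continuity argument given the integrability built into $(\rho,\cA,\cB)$.

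For $\Phi^*$: here $\Phi$ is the convex indicator of the pointwise set $K=\{(\gamma,a,b):\gamma+F^*(a,b)\le 0\}$, so $\Phi^*$ is its support function. If $\rho$ has a nontrivial negative part, sending $\gamma\to-\infty$ there shows $\Phi^*=+\infty$; so one may assume $\rho\ge 0$, and then the optimal choice is $\gamma=-F^*(a,b)$, which lies in $C_b(\Lambda)$ since $F^*$ is continuous and, by the $p$-coercivity \eqref{eq: p-coercive}, has at most conjugate-power growth so that bounded $(a,b)$ give bounded $F^*(a,b)$. This reduces $\Phi^*(\rho,\cA,\cB)$ to $\sup_{(a,b)\in C_b(\Lambda;\IR^d\times\IS^d)}\int_\Lambda\big(a\cdot\diff\cA+b:\diff\cB-F^*(a,b)\,\diff\rho\big)$. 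The same growth bound forces this supremum to be $+\infty$ unless $\cA,\cB$ are absolutely continuous with respect to $\rho$ with Radon--Nikodym densities in the appropriate $\rho$-weighted Lebesgue space --- this is exactly the step that pins elements of the possibly larger bidual $C_b(\Lambda;\cX)^*$ down to genuine measures. On that domain, writing $\alpha=\diff\cA/\diff\rho$, $\beta=\diff\cB/\diff\rho$, the Fenchel--Young inequality gives the bound ``$\le\int_\Lambda F(\alpha,\beta)\,\diff\rho$'', while the reverse inequality follows from $F=F^{**}$ (valid since $F$ is proper, convex and lower semicontinuous) together with a measurable-selection argument producing near-optimal bounded continuous $(a,b)$ from the disintegration of $(\rho,\cA,\cB)$; this is the classical duality for integral functionals on spaces of measures (Rockafellar; see also the treatment in \cite{guo2022calibration} and \cite{huesmann2019benamou}). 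Hence $\Phi^*(\rho,\cA,\cB)=\int_\Lambda F\big(\tfrac{\diff\cA}{\diff\rho},\tfrac{\diff\cB}{\diff\rho}\big)\,\diff\rho$ when $\rho\ge 0$ and $\cA,\cB\ll\rho$ suitably, and $+\infty$ otherwise.

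Adding the two conjugates, $\Phi^*+\Psi^*$ equals $+\infty$ unless $(\rho,\cA,\cB)$ is feasible for the measure formulation of Problem~\ref{prob: primal}, in which case it equals $\int_\Lambda F(\diff\cA/\diff\rho,\diff\cB/\diff\rho)\,\diff\rho$; taking the infimum over $(\rho,\cA,\cB)\in C_b(\Lambda;\cX)^*$ and using that $(\rho,\alpha,\beta)\mapsto(\rho,\rho\alpha,\rho\beta)$ is a cost-preserving bijection of the feasible sets yields exactly $V$ from \eqref{eq: objective function}, which is \eqref{eq: sum of conjugates}.

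The main obstacle is the computation of $\Phi^*$ on the non-compact domain $\Lambda$: collapsing the support function of the pointwise constraint set $K$ into an integral functional, and in particular excluding singular or purely finitely additive contributions from $C_b(\Lambda;\cX)^*\setminus\cM(\Lambda;\cX)$, is the delicate point, and it is precisely where $p$-coercivity and lower semicontinuity of $F$ enter. The only genuinely new ingredient relative to \cite{guo2022calibration} and \cite{huesmann2019benamou} is the zeroth-order discount term $x_r\phi$ in the representation of $\gamma$; since $x_r$ is unbounded on $\IR^d$ one must track it with some care (e.g. in a suitably weighted space, or by using the decay/control of $\phi$ and its derivatives where needed), but it enters linearly and leaves the convex-analytic structure untouched, so the argument of \cite{guo2022calibration} carries over with cosmetic modifications.
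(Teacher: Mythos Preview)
Your proposal is correct and follows the same overall architecture as the paper: compute $\Phi^*$ and $\Psi^*$ separately and check that their sum reproduces the primal functional. Two points of execution differ. First, you evaluate $\Psi^*$ completely, as the $0/{+\infty}$ indicator of the constraint set; the paper stops one step earlier, leaving $\Psi^*$ as the supremum over $(\phi,\lambda)$ and then simply recognising the sum $\Phi^*+\Psi^*$ as the inner expression of the saddle-point Problem~\ref{prob: saddle pt}. Your version is marginally more informative but neither route costs more. Second, and more substantively, for the passage from $\cM(\Lambda;\cX)$ to the full bidual $C_b(\Lambda;\cX)^*$---the step you rightly flag as delicate---the paper does not argue directly via $p$-coercivity as you sketch. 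Instead it first establishes \eqref{eq: sum of conjugates} with the infimum taken over $\cM(\Lambda;\cX)$ and then invokes Lemmas~A.1 and~A.2 of \cite{guo2022calibration} as black boxes to conclude that the infimum over $C_b(\Lambda;\cX)^*$ coincides with that over measures. Your coercivity argument is essentially the content of those lemmas, so the two approaches converge; the paper's is more modular, yours more self-contained. Your remark that the unbounded zeroth-order term $x_r\phi$ requires some care (for $(\gamma,a,b)$ to genuinely sit in $C_b(\Lambda;\cX)$) is well taken; the paper does not address this point explicitly.
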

\sloppy
We remark that switching from $\phi\in C_c^{\infty}(\Lambda)$ (with $\phi(T,\cdot)=0$) to $\phi\in\mathrm{BV}_T\big([0,T];C^2_b(\IR^d)\big)$ does not change the value of the supremum, which will be formalised by the notion of viscosity solutions later in Definition~\ref{def: viscosity solution}
\begin{proof}
We first evaluate $\Phi^*$ on $(\rho,\cA,\cB)\in\mathcal{M}(\Lambda;\cX)$ such that $\rho\in\mathcal{M}_+(\Lambda)$ and $\cA,\cB\ll\rho$ as follows
\begin{align}
\Phi^*(\rho,\cA,\cB)&=\sup_{\gamma+F^*(a,b)\leq 0}\int_{\Lambda}\!\bigg(\gamma+a\cdot\frac{\mathrm{d}\cA}{\mathrm{d}\rho}+b:\frac{\mathrm{d}\cB}{\mathrm{d}\rho}\bigg)\,\mathrm{d}\rho\notag\\
&=\sup_{a,b}\int_{\Lambda}\!\bigg(a\cdot\frac{\mathrm{d}\cA}{\mathrm{d}\rho}+b:\frac{\mathrm{d}\cB}{\mathrm{d}\rho}-F^*(a,b)\bigg)\,\mathrm{d}\rho\notag\\
&=\int_{\Lambda}\!\sup_{a,b}\bigg(a\cdot\frac{\mathrm{d}\cA}{\mathrm{d}\rho}+b:\frac{\mathrm{d}\cB}{\mathrm{d}\rho}-F^*(a,b)\bigg)\,\mathrm{d}\rho =\int_{\Lambda}\!F\bigg(\frac{\mathrm{d}\cA}{\mathrm{d}\rho},\frac{\mathrm{d}\cB}{\mathrm{d}\rho}\bigg)\,\mathrm{d}\rho,\notag
\end{align}
where the exchange of the integral and the supremum is justified as in \textcite[Lemma A.1]{guo2022calibration}. Furthermore, the arguments in \textcite[Lemma A.1]{guo2022calibration} also show that $\Phi^*=\infty$ for 
$(\rho,\cA,\cB)\in C_b(\Lambda;\cX)^*$ which are not of the form above. In summary:
\begin{equation}\label{eq: phi conjugate}
\Phi^*(\rho,\cA,\cB)=\begin{cases}\int_{\Lambda}\!F\bigg(\frac{\mathrm{d}\cA}{\mathrm{d}\rho},\frac{\mathrm{d}\cB}{\mathrm{d}\rho}\bigg)\,\mathrm{d}\rho&\begin{aligned}\!&\text{if } (\rho,\cA,\cB)\in\mathcal{M}(\Lambda,\cX),\\& \rho\in \mathcal{M}_+(\Lambda), \cA,\cB\ll\rho\end{aligned}\\+\infty&\text{otherwise}.\end{cases}
\end{equation}
We now compute $\Psi^*:C_b(\Lambda;\cX)^*\to\IR\cup\{+\infty\}$:
\begin{align}
\Psi^*(\rho,&\cA,\cB)=\sup_{(\gamma,a,b)\in C_b(\Lambda;\cX)}\Big\{\langle (\gamma,a,b),(\rho,\cA,\cB)\rangle-\int_{\IR^d}\!\phi(0,x)\mathrm{d}\mu_0+\sum_{i=1}^n\lambda_iu_i\Big\}\notag\\
&=\sup_{\phi,\lambda}\bigg\{\bigg\langle \bigg(x_r\phi-\partial_t\phi - \sum_{i=1}^n\lambda_iG_i(x)\delta_{\tau_i},-\nabla\phi,-\frac{1}{2}\nabla^2\phi\bigg),(\rho,\cA,\cB)\bigg\rangle\notag\\
&\qquad\qquad\qquad\qquad-\int_{\IR^d}\!\phi(0,x)\mathrm{d}\mu_0+\sum_{i=1}^n\lambda_iu_i\bigg\}\notag\\
&=\sup_{\phi,\lambda}\bigg\{\bigg[\int_{\Lambda}\!x_r\phi\mathrm{d}\rho-\partial_t\phi\mathrm{d}\rho- \sum_{i=1}^n\lambda_iG_i(x)\delta_{\tau_i}\mathrm{d}\rho-\nabla_x\phi\mathrm{d}\cA-\frac{1}{2}\nabla_x^2\phi:\mathrm{d}\cB\bigg]\notag\\
&\qquad\qquad\qquad\qquad-\int_{\IR^d}\!\phi(0,x)\mathrm{d}\mu_0+\sum_{i=1}^n\lambda_iu_i\bigg\},\label{eq: psi conjugate}
\end{align}
where $(\gamma,a,b)$ are represented by $(\phi,\lambda)$ in the sense of Definition \ref{def:represent}.

Thus, since $\Phi^*$ is independent of $(\phi,\lambda)$ we can simply add \eqref{eq: phi conjugate} and \eqref{eq: psi conjugate} together to get the argument of the infimum in Problem~\ref{prob: saddle pt}, so
\begin{equation*}
V=\inf_{(\rho,\cA,\cB)\in \mathcal{M}(\Lambda;\cX)}\{\Phi^*(\rho,\cA,\cB)+\Psi^*(\rho,\cA,\cB)\},
\end{equation*}
where the restriction $\rho\in \mathcal{M}_+(\Lambda), \cA,\cB\ll\rho$ are ensured automatically since otherwise $\Phi^*=+\infty$. To finish the proof, we invoke \textcite[Lemma A.2]{guo2022calibration} to conclude that  
\begin{align*}
V&=\inf_{(\rho,\cA,\cB)\in\mathcal{M}(\Lambda;\cX)}\{\Phi^*(\rho,\cA,\cB)+\Psi^*(\rho,\cA,\cB)\}\\&=\inf_{(\rho,\cA,\cB)\in C_b(\Lambda;\cX)^*}\{\Phi^*(\rho,\cA,\cB)+\Psi^*(\rho,\cA,\cB)\}.
\end{align*}
\end{proof}

Now we apply the Fenchel-Rockerfellar duality theorem as formulated in \textcite[Theorem~1.9]{villani2003topics}. We first note that as the constraints in the functionals $\Phi$ and $\Psi$ are affine, the functionals are clearly convex in $(\gamma,a,b)$. We now check the conditions of the theorem at the point $(0,O^{d\times 1},O^{d\times d})$ which is represented by $(\phi,\lambda)=(e^{tx_r},O^{n\times 1})$ where $O$ refers to the zero matrix of appropriate dimension. Since $F$ is non-negative
\begin{equation*}
F^*(O^{d\times 1})=-\inf_{(\alpha,\beta)\in\IR^d\times\IS^d_+}F(\alpha,\beta)\leq 0.
\end{equation*}
Therefore, we have $\Phi(0,O^{d\times 1},O^{d\times d})=0$ and $(0,O^{d\times 1},O^{d\times d})$ is a point of continuity of $\Phi$ since $F^*$ is continuous. Moreover, since $\phi(t,x)=e^{tx_r}$ and $\mu_0$ is a probability measure
\begin{equation*}
\Psi(0,O^{d\times 1},O^{d\times d})=\int_{\IR^d}\!\phi(0,x)\,\mathrm{d}\mu_0=1.
\end{equation*}
Thus, as $\Psi$ is finite and $\Phi$ is finite and continuous at $(0,O^{d\times 1},O^{d\times d})$ and both take values in $(-\infty,+\infty]$ we may apply the Fenchel-Rockerfellar duality theorem and obtain that
\begin{align}
\inf_{(\gamma,a,b)\in C_b(\Lambda;\cX)}&\{\Phi(-\gamma,-a,-b)+\Psi(\gamma,a,b)\}\notag\\
&=\sup_{(\rho,\cA,\cB)\in C_b(\Lambda;\cX)^*}\{-\Phi^*(-\rho,-\cA,-\cB)-\Psi^*(-\rho,-\cA,-\cB)\}\notag\\
&=\sup_{(\rho,\cA,\cB)\in C_b(\Lambda;\cX)^*}\{-\Phi^*(\rho,\cA,\cB)-\Psi^*(\rho,\cA,\cB)\}\notag\\
&=-\inf_{(\rho,\cA,\cB)\in C_b(\Lambda;\cX)^*}\{\Phi^*(\rho,\cA,\cB)+\Psi^*(\rho,\cA,\cB)\}.\notag
\end{align}
Thus rearranging we obtain
\begin{equation}\label{eq:V super sol}
\begin{split}
V&=\inf_{(\rho,\cA,\cB)\in C_b(\Lambda;\cX)^*}\{\Phi^*(\rho,\cA,\cB)+\Psi^*(\rho,\cA,\cB)\}\\&=\sup_{(\gamma,a,b)\in C_b(\Lambda;\cX)}\{-\Phi(-\gamma,-a,-b)-\Psi(\gamma,a,b)\}\\
&=\sup_{\substack{(\gamma,a,b)\in C_b(\Lambda;\cX)\\-\gamma+F^*(-a,-b)\leq 0\\ (\gamma,a,b)\text{is represented by } (\phi,\lambda)}}\bigg\{\sum_{i=1}^n\lambda_iu_i-\int_{\IR^d}\!\phi(0,x)\,\mathrm{d}\mu_0\bigg\}\\
&=\sup_{\substack{(\lambda,\phi)\in\IR^n\times\mathrm{BV}_T([0,T];C^2_b(\IR^d))\\\partial_t\phi -x_r\phi + \sum_{i=1}^n\lambda_iG_i(x)\delta_{\tau_i}+F^*(\nabla_x\phi,\frac{1}{2}\nabla^2_x\phi)\leq 0}}\bigg\{\sum_{i=1}^n\lambda_iu_i-\int_{\IR^d}\!\phi(0,x)\,\mathrm{d}\mu_0\bigg\}.
\end{split}
\end{equation}
Now, to obtain equality in the HJB equation constraint and thus the HJB equation \eqref{eq: dual HJB} and dual formulation in Theorem~\ref{thm: dual problem}, we adapt the classical notion of viscosity solutions from \textcite{lions1983optimal} to include the required jump discontinuities, in analogy to \textcite{guo2022calibration}. First define disjoint intervals $I_k\coloneqq [\tau_{k-1},\tau_k)$ with $\tau_0=0$, with $\bigcup_{k=1}^n I_k=[0,T)$. 
\begin{definition}\label{def: viscosity solution}
For any $\lambda\in\IR^n$, we say $\phi\in\mathrm{BV}_T([0,T];C_b(\IR^d))$ is a viscosity subsolution (supersolution) of \eqref{eq: dual HJB} if $\phi|_{I_k\times{\IR^d}}\in C_b(I_k;C_b(\IR^d))$  is a classical (continuous) viscosity subsolution (supersolution) of \eqref{eq: dual HJB} in $I_k\times\IR^d$ and for all $k=1,\dots,n$ has jump discontinuities:
\begin{equation*}
\phi(t,x)=\phi(t^-,x)-\sum_{i=1}^n\lambda_iG_i(x)\indic{\{t=\tau_i\}}.
\end{equation*}
With terminal condition $\phi(T,\cdot) = 0$. In addition, $\phi\in\mathrm{BV}_T([0,T];C_b(\IR^d))$ is a viscosity solution of \eqref{eq: dual HJB} if it is both a viscosity subsolution and viscosity supersolution of \eqref{eq: dual HJB}.
\end{definition}

The expression for $V$ in \eqref{eq:V super sol} involved supersolutions to \eqref{eq: dual HJB} and the first step is to show that we can restrict to viscosity solutions, as stated in the first part of Theorem ~\ref{thm: dual problem}. For this we follow the proof of \textcite[Proposition~3.5]{guo2022calibration}. \textcite[Remark~3.9]{guo2022calibration} provides a comparison principle, which is the classical comparison principle of viscosity solutions applied to $\phi$ on $I_k\times\IR^d$ for each $k$. Using this, one can deduce existence and uniqueness of solutions to \eqref{eq: dual HJB} via \textcite{crandall1992user}.
The comparison principle also implies that $V$ in \eqref{eq: dual problem} is smaller than the supremum over viscosity solutions. We then use the smoothing argument from \textcite{bouchard2017hedging}, which shows that any viscosity solution of \eqref{eq: dual HJB} can be approached by smooth supersolutions. This, together with \eqref{eq:V super sol} shows that $V$ is larger than the supremum over viscosity solutions. This allows us to conclude the proof of the first point of Theorem~\ref{thm: dual problem}. We now seek to obtain the form of the optimal $(\alpha,\beta)$ for the second part of Theorem~\ref{thm: dual problem}.
\begin{lemma}\label{lem: optimisers}
If the supremum in Theorem~\ref{thm: dual problem} is attained for some $\lambda^*$ with $\phi^*\in\mathrm{BV}_T([0,T];C^2_b(\IR^d))$ solving the corresponding HJB equation, and $(\rho^*,\alpha^*,\beta^*)$ being the optimal solution of Problem~\ref{prob: primal}, then $(\alpha^*,\beta^*)$ is given by:
\begin{equation*}
(\alpha^*_t,\beta^*_t)=\nabla F^*\bigg(\nabla_x\phi^*(t,\cdot),\frac{1}{2}\nabla_x^2\phi^*(t,\cdot)\bigg),\qquad\mathrm{d}\rho^*\text{ - almost everywhere}.
\end{equation*}
\end{lemma} 
\begin{proof}
Let $(\alpha^*,\beta^*)$ be the optimal solution of Problem~\ref{prob: primal}, then $(\rho^*,\rho^*\alpha^*,\rho^*\beta^*)$ also achieves the infimum in Problem~\ref{prob: saddle pt}. Assume that $\lambda^*$ is the optimal solution solving \eqref{eq: dual problem} with corresponding solution to \eqref{eq: dual HJB} $\phi^*$. Then, $\lambda^*$ achieves the supremum in Problem~\ref{prob: saddle pt}, so with our optimal solution we may rewrite Problem~\ref{prob: saddle pt} as
\begin{align}
V=\int_{\Lambda}\biggl(F(\alpha^*,\beta^*)-\partial_t\phi^*&-\nabla_x\phi^*\cdot\alpha^*-\frac{1}{2}\nabla^2_x\phi^* :\beta^*+x_r\phi^*\notag\\
&-\sum_{i=1}^n\lambda^*G_i(x)\delta_{\tau_i}\biggr)\,\mathrm{d}\rho^*-\int_{\IR^d}\!\phi^*\,\mathrm{d}\mu_0+\sum_{i=1}^n\lambda_i^*u_i.\label{eq: optimal saddle pt}
\end{align}
Since $(\phi^*,\lambda^*)$ are optimal, we have from Theorem~\ref{thm: dual problem} that 
\begin{equation*}
V=\sum_{i=1}^n\lambda_i^*u_i-\int_{\IR^d}\!\phi^*\,\mathrm{d}\mu_0.
\end{equation*}
Therefore, \eqref{eq: optimal saddle pt} is equivalent to
\begin{align}
0&=\int_{\Lambda}\!\bigg(F(\alpha^*,\beta^*)-\partial_t\phi^*-\nabla_x\phi^*\cdot\alpha^*-\frac{1}{2}\nabla^2_x\phi^* :\beta^*+x_r\phi^*-\sum_{i=1}^n\lambda^*G_i(x)\delta_{\tau_i}\bigg)\diff\rho^*\notag\\
&=\int_{\Lambda}\!\bigg(F(\alpha^*,\beta^*)+F^*(\nabla_x\phi^*,\frac{1}{2}\nabla^2_x\phi^*)-\nabla_x\phi^*\cdot\alpha^*-\frac{1}{2}\nabla^2_x\phi^* :\beta^*\bigg)\,\mathrm{d}\rho^*.\label{eq: F and F star}
\end{align}
Now define $(\bar{\alpha},\bar{\beta})$ as
\begin{equation*}
(\bar{\alpha},\bar{\beta})=\nabla F^*\bigg(\nabla_x\phi^*,\frac{1}{2}\nabla^2_x\phi^*\bigg).
\end{equation*}
Note that 
\begin{equation}\label{eq: argmax legendre}
(\bar{\alpha},\bar{\beta})=\nabla F^*\bigg(\nabla_x\phi^*,\frac{1}{2}\nabla_x^2\phi^*\bigg)=\argmax_{(a,b)\in\IR^d\times\IS^d_+}\bigg\{a\cdot\nabla_x\phi^*+b:\frac{1}{2}\nabla^2_x\phi^*-F(a,b)\bigg\}.
\end{equation}
Thus,
\begin{equation*}
F^*\bigg(\nabla_x\phi^*,\frac{1}{2}\nabla_x^2\phi^*\bigg)=\nabla_x\phi^*\cdot\bar{\alpha}+\frac{1}{2}\nabla_x^2\phi^* :\bar{\beta}-F(\bar{\alpha},\bar{\beta}).
\end{equation*}
Since $F$ is convex, its Legendre transform is an involution (see \textcite[Corollary~12.2.1]{rockafellar1970convex}), so taking the double Legendre transform, we have
\begin{equation*}
(A,B)\coloneqq\nabla F(\bar{\alpha},\bar{\beta})=\nabla F^{**}(\bar{\alpha},\bar{\beta})=\argmax_{(a,b)\in\IR^d\times\IS^d_+}\{a\cdot\bar{\alpha}+b:\bar{\beta}-F^*(a,b)\}.
\end{equation*}
Therefore,
\begin{align*}
F^{**}(\bar{\alpha},\bar{\beta})&=A\cdot\bar{\alpha}+B:\bar{\beta}-F^*(A,B)\\
&=A\cdot\bar{\alpha}+B:\bar{\beta}-\max_{(x,y)\in\IR^d\times\IS^d_+}\{A\cdot x+B:y-F(x,y)\}.
\end{align*}
Since $F(\bar{\alpha},\bar{\beta})=F^{**}(\bar{\alpha},\bar{\beta})$, from \eqref{eq: argmax legendre} we have that for the affine terms to cancel, we need $(A,B)=(\nabla_x\phi^*,\frac{1}{2}\nabla^2_x\phi^*)$. So, substituting into \eqref{eq: F and F star} we have
\begin{equation}\label{eq: alpha star and alpha bar}
0=\int_{\Lambda}\!\big(F(\alpha^*,\beta^*)-F(\bar{\alpha},\bar{\beta})-\nabla_x\phi^*\cdot(\alpha^*-\bar{\alpha})-\frac{1}{2}\nabla_x^2\phi^* :(\beta^*-\bar{\beta})\big)\,\mathrm{d}\rho^*.
\end{equation}
Since $F(\alpha,\beta)$ is assumed to be strongly convex in $(\alpha,\beta)$, we have from \eqref{eq: strong convexity} that for some constant $C>0$
\begin{equation*}
F(\alpha^*,\beta^*)-F(\bar{\alpha},\bar{\beta})\geq\langle\nabla F(\bar{\alpha},\bar{\beta}),(\alpha^*-\bar{\alpha},\beta^*-\bar{\beta})\rangle+C\big(||\alpha^*-\bar{\alpha}||^2+||\beta^*-\bar{\beta}||^2\big).
\end{equation*}
Applying this inequality to \eqref{eq: alpha star and alpha bar} and noting that $\nabla F(\bar{\alpha},\bar{\beta})=(\nabla_x\phi^*,\frac{1}{2}\nabla^2_x\phi^*)$ gives us
\begin{equation*}
0\geq\int_{\Lambda}\!C(||\alpha^*-\bar{\alpha}||^2+||\beta^*-\bar{\beta}||^2)\,\mathrm{d}\rho^*\geq 0.
\end{equation*}
Therefore we have $(\alpha^*,\beta^*)=(\bar{\alpha},\bar{\beta})$ up to $\mathrm{d}\rho^*$ almost everywhere.
\end{proof}

\subsection{Sequential Calibration Setup}\label{sec:2dsetup}
We now specify the setting in which we seek to apply Theorem~\ref{thm: dual problem}. We first specify the state variables of our model: we consider $d=2$, $X_t = (r_t, Z_t)$, were $r_t$ is the short rate and $Z_t$ is the log-stock price. We start with a setting in which a model for the short rate is fixed and has already been calibrated. We refer to this as a ``sequential calibration'' problem. Our aim is then to calibrate a local volatility model for the stock price, where the local volatility function can depend on the short rate. For this we will employ the OT methodology developed above. The joint calibration problem in which both the short rate and the stock price are calibrated simultaneously using the OT methodology, and in particular the short rate model parameters can depend on the stock price, is considered in our sequel paper \textcite{joseph2023joint}.

Specifically, we take now a given pre-calibrated Hull-White model for the interest rate, see \textcite{hull1990pricing, hull1994branching}, and local volatility dynamics for the log-price
\begin{align*}
\diff Z_t &= r_t - \frac{1}{2}\sigma^2(t,Z_t,r_t)+\sigma(t,Z_t,r_t)\diff W^1_t,\\
\diff r_t &= \big(b(t)-a r_t\big)\diff t +\sigma_r\diff W^2_t,\\
\diff\langle W^1,W^2\rangle_t&=\xi(t,Z_t,r_t)\diff t,
\end{align*}
where $a,\sigma_r>0$ are constants and $b(\cdot)$ is a function of time, calibrated so that the dynamics of $(r_t)_{0\leq t\leq T}$ match the market data (e.g., suitable interest rates caps and floors). Both $a$ and $\sigma_r$ being positive constants is not a particularly restrictive constraint as remarked in \textcite{brigo2007interest,hull1995note}. Note that $b(t)$ will therefore need to be calibrated to fit the term structure of interest rates seen in the market. Our aim now is to calibrate $\sigma(t,Z_t,r_t)$ and $\xi(t,Z_t,r_t)$ using our OT-methodology. In order to calibrate the local volatility and correlation, we will want to find a cost function that forces $\alpha_t^{\IP}$ and $\beta_t^{\IP}$ to take the form above. As discussed before, we achieve this by using a functional form for $F$ as long as $(\alpha, \beta)\in \Gamma$ for some convex set $\Gamma$, with $F=+\infty$ otherwise. The set $\Gamma$ will enforce in particular that $\beta$ is positive semidefinite and symmetric, and $\beta_{22}=\sigma_r^2$, which defines a set that is convex in $\beta$. With no further constraints on $\beta_{12}$, we obtain a non-explicit form of the Legendre transform of the cost function, $F^*$, which in general will require a numerical solver at each $(t,x)$. To reduce the computational difficulty of the numerical solution, as in \textcite{guo2022calibration}, we first restrict the correlation to the following form, which still keeps the set $\Gamma$ convex
\begin{equation}\label{eq: convexity adjustment}
\xi(t,Z_t,r_t)=\frac{\sigma_r}{\sigma(t,Z_t,r_t)}\xi_{\mathrm{ref}}(t,Z_t,r_t),\quad\text{for}\:t\in[0,T],
\end{equation}
where $\xi_{\mathrm{ref}}(t,Z_t,r_t)\in\IR^2$ here is a fixed reference function. We then set
\begin{align}
\Gamma(t,Z_t,r_t)=\bigg\{(\alpha,\beta)\in\IR^2\times\IS^2\: :\:&\alpha_1=r_t-\frac{1}{2}\beta_{11},\:\alpha_2=b(t)-ar_t,\notag\\
&\beta_{12}=\beta_{21}=\xi_{\mathrm{ref}}\sigma_r^2,\:\beta_{22}=\sigma_r^2\bigg\},\label{eq: gamma main case}
\end{align}
where one would change the set $\Gamma$ suitably if a different short rate model was fixed initially. 
We also require the inequality $\xi_{\mathrm{ref}}(t,Z_t,r_t)^2\sigma_r^2\leq\sigma^2(t,Z_t,r_t)$ to keep $\xi(t,Z_t,r_t)\in [-1,1]$ as a correlation also. Since $r_t$ is on a much lower scale to $Z_t$, we have that $\sigma_r^2\ll\sigma^2$, so this condition is not financially restrictive. To enforce this condition, we will define a convex function $H:\IR\times\IR^+\times\IR\to\IR\cup\{ +\infty\}$ with a parameter $p>2$
$$
H(x,\bar{x},s)\coloneqq\begin{cases}
(p-1)\big(\frac{x-s}{\bar{x}-s}\big)^{1+p}+(p+1)\big(\frac{x-s}{\bar{x}-s}\big)^{1-p}-2p,&\text{if}\: x,\bar{x}>s,\\
+\infty,&\text{otherwise}.
\end{cases}
$$
The coefficients of each term ensure that $H(x,\bar{x},s)$ is minimised over $x$ at $x=\bar{x}$ with $\min_x H(x,\bar{x},s)=0$. We fix a reference local volatility function $\bar{\sigma}^2=\bar{\sigma}^2(t,Z_t,r_t)$ that represents the desired model. Our aim is to find a calibrated model which does not deviate too much from the reference one. To achieve this we set 
\begin{equation}\label{eq: cost function}
F(t,Z,r,\alpha,\beta)=\begin{cases}
H(\beta_{11},\bar{\sigma}^2,\xi_{\mathrm{ref}}^2\sigma_r^2),&\text{if}\:(\alpha,\beta)\in\Gamma(t,Z_t,r_t),\\
+\infty,&\text{otherwise}.
\end{cases}
\end{equation}
\begin{remark}
It is easy to check that the function $F$ defined in (\ref{eq: cost function}) satisfies the assumptions for the duality proof.
\end{remark}
This cost function will ensure that we retain the Hull-White model in the interest rate, while also matching the market prices for the call options by calibrating the volatility of the stock. Additionally, we wish to enforce that the matrix $\beta$ from our model characteristics is positive definite and that $\xi_{\mathrm{ref}}$ remains a correlation function, and we achieve this by setting $s=\xi_{\mathrm{ref}}^2\sigma_r^2$ as an argument of $H$ in the definition of $F$. Applying Theorem~\ref{thm: dual problem}, we have the following dual formulation with the given cost function $F(\alpha,\beta)$
\begin{problem}
$$
V=\sup_{\lambda}\lambda\cdot u-\phi^{\lambda}(0,Z_0,r_0).
$$
Where $\phi^{\lambda} = \phi(t,z,r)$ solves the HJB equation for $(t,z,r)\in [0,T]\times\IR^2$
\begin{align}
\sum_{i=1}^n\lambda_i(\exp(z)-K_i)^+\delta_{\tau_i}&+\partial_t\phi+\sup_{\beta_{11}}\bigg\{(r-\frac{1}{2}\beta_{11})\partial_{z}\phi+(b(t)-ar)\partial_{r}\phi+\frac{1}{2}\beta_{11}\partial^2_{zz}\phi\notag\\&+\bar{\xi}\sigma_r^2\partial^2_{zr}\phi+\frac{1}{2}\sigma^2_r\partial^2_{rr}\phi-r\phi-H(\beta_{11},\bar{\sigma}^2,\bar{\xi}^2\sigma_r^2)\bigg\}=0.\label{eq: HW HJB}
\end{align}

\end{problem}
\begin{lemma}\label{lem: sup beta}
The optimal characteristic in the HJB equation (\ref{eq: HW HJB}), $\beta_{11}^*$ is given by
\begin{align}
\beta_{11}^*(t,z,r) = \bar{\xi}^2\sigma_r^2+&\Biggl(\frac{(\bar{\sigma}^2-\bar{\xi}^2\sigma_r^2)^p(\phi_{zz}-\phi_z)}{4(p^2-1)}\notag\\
&+\frac{1}{2}\sqrt{(\frac{(\bar{\sigma}^2-\bar{\xi}^2\sigma_r^2)^p(\phi_{zz}-\phi_z)}{2(p^2-1)})^2+4(\bar{\sigma}^2-\bar{\xi}^2\sigma_r^2)^{2p}}\Biggr)^{\frac{1}{p}}.\label{eq: optimal beta11}
\end{align}
\end{lemma}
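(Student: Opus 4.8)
The plan is to recognise the supremum in \eqref{eq: HW HJB} over $\beta_{11}$ as a one-dimensional concave maximisation, solve its first-order condition explicitly, and check that the critical point found is indeed the global maximiser. Collecting in \eqref{eq: HW HJB} the terms that actually involve $\beta_{11}$ (namely $-\tfrac12\beta_{11}\partial_z\phi$ from the drift term, $\tfrac12\beta_{11}\partial^2_{zz}\phi$ from the second-order term, and $-H(\beta_{11},\bar\sigma^2,\bar\xi^2\sigma_r^2)$), the problem is
$$
\sup_{\beta_{11}}\left\{\tfrac12\beta_{11}\bigl(\phi_{zz}-\phi_z\bigr)-H\bigl(\beta_{11},\bar\sigma^2,\bar\xi^2\sigma_r^2\bigr)\right\}.
$$
I would set $g(\beta_{11}):=\tfrac12\beta_{11}(\phi_{zz}-\phi_z)-H(\beta_{11},\bar\sigma^2,\bar\xi^2\sigma_r^2)$ and study $\sup_{\beta_{11}>\bar\xi^2\sigma_r^2}g$; note the feasible set is non-empty precisely because of the standing condition $\bar\xi^2\sigma_r^2<\bar\sigma^2$ recorded before \eqref{eq: gamma main case}, which also guarantees $\beta_{11}$ stays in the convex set $\Gamma$ and keeps $\xi$ a genuine correlation.

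First I would argue that the supremum is attained at an interior critical point and is therefore characterised by $g'=0$. Since both monomials $y^{1+p}$ and $y^{1-p}$ (with $y=(\beta_{11}-\bar\xi^2\sigma_r^2)/(\bar\sigma^2-\bar\xi^2\sigma_r^2)>0$, $p>2$) are convex on $(0,\infty)$ with positive coefficients, $H(\cdot,\bar\sigma^2,\bar\xi^2\sigma_r^2)$ is convex, hence $g$ is concave (affine minus convex). Moreover $g(\beta_{11})\to-\infty$ as $\beta_{11}\downarrow\bar\xi^2\sigma_r^2$ (the $y^{1-p}$ term blows up) and as $\beta_{11}\to\infty$ (the $y^{1+p}$ term dominates the linear term), so the maximiser exists, is unique, and solves $\tfrac12(\phi_{zz}-\phi_z)=\partial_{\beta_{11}}H$.

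Then I would solve this equation in closed form. A direct differentiation gives
$$
\partial_{\beta_{11}}H=\frac{p^2-1}{\bar\sigma^2-\bar\xi^2\sigma_r^2}\bigl(y^{p}-y^{-p}\bigr),
$$
where the cancellation into the clean factor $p^2-1$ is exactly what the choice of coefficients $(p-1)$ and $(p+1)$ in the definition of $H$ is designed to produce. Substituting $w:=y^{p}>0$ turns the critical-point equation into the quadratic $w^2-Rw-1=0$ with $R$ proportional to $(\phi_{zz}-\phi_z)$; the unique positive root is $w=\tfrac12\bigl(R+\sqrt{R^2+4}\bigr)$, and undoing the substitution via $\beta_{11}^{*}=\bar\xi^2\sigma_r^2+(\bar\sigma^2-\bar\xi^2\sigma_r^2)\,w^{1/p}$ and rearranging yields \eqref{eq: optimal beta11}. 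I do not expect a genuine obstacle: the only delicate points are the algebraic bookkeeping through the substitution $w=y^p$ and the selection of the positive root, while the conceptual crux — and the only part that needs real justification — is the concavity-plus-coercivity argument of the second step that legitimises reducing the supremum to the first-order condition.
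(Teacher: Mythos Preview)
Your proposal is correct and follows essentially the same route as the paper: isolate the $\beta_{11}$-dependent terms, differentiate to obtain the first-order condition $\tfrac12(\phi_{zz}-\phi_z)=\partial_{\beta_{11}}H$, reduce to a quadratic in $(\beta_{11}-\bar\xi^2\sigma_r^2)^p$, and select the positive root. Your concavity-plus-coercivity argument is in fact more careful than the paper's own proof, which simply differentiates without justifying that the critical point is the global maximiser.
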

\begin{proof}
By differentiating the argument of the supremum in (\ref{eq: HW HJB}) with respect to $\beta_{11}$, we notice that solving the supremum over $\beta_{11}$ in (\ref{eq: HW HJB}) is equivalent to solving the equation for $x$:
\begin{equation*}
\frac{1}{2}(\partial^2_{zz}\phi-\partial_z\phi)=\partial_x H(x,\bar{\sigma}^2,\bar{\xi}^2\sigma_r^2).
\end{equation*}
Computing the right hand term and rearranging, we have
$$
\frac{\partial}{\partial x}H(x,\bar{\sigma}^2,\bar{\xi}^2\sigma_r^2) = (p^2-1)\bigg[\bigg(\frac{x - \bar{\xi}^2\sigma_r^2}{\bar{\sigma}^2 - \bar{\xi}^2\sigma_r^2}\bigg)^p - \bigg(\frac{x - \bar{\xi}^2\sigma_r^2}{\bar{\sigma}^2 - \bar{\xi}^2\sigma_r^2}\bigg)^{-p}\bigg]
$$
We again rearrange and arrive at the quadratic in $(x - \bar{\xi}^2\sigma_r^2)^p$
$$
(x - \bar{\xi}^2\sigma_r^2)^{2p} - (x - \bar{\xi}^2\sigma_r^2)^p(\bar{\sigma}^2 - \bar{\xi}^2\sigma_r^2)^p\frac{\phi_{zz}-\phi_z}{2(p^2-1)} - (\bar{\sigma}^2 - \bar{\xi}^2\sigma_r^2)^{2p} = 0
$$
Thus solving this and taking the positive root since $x>\bar{\xi}^2\sigma_r^2$ gives us the desired answer.
\end{proof}
Once the optimal $\beta^{*}_{11}$ is obtained the full model dynamics are specified and we can compute the model price $\psi(0,z,r)$ of an instrument with payoff $\tilde G$ and maturity $\tilde \tau\leq T$ by solving the standard pricing PDE for $(t,z,r)\in [0,\tilde \tau)\times\IR^2$

\begin{equation}\label{eq: pricing PDE}
\begin{cases}\begin{aligned}\!\partial_t\psi+(r-\frac{1}{2}\beta_{11}^*)\partial_z\psi&+\big(b(t)-ar\big)\partial_r\psi+\frac{1}{2}\beta_{11}^*\partial_{zz}^2\psi\\&+\overline{\xi}\sigma_r^2\partial^2_{zr}\psi+\sigma_r^2\partial^2_{rr}\psi-r\psi=0,\end{aligned}& \\
\psi(\tilde \tau,z,r)=\tilde{G}(z,r).&\end{cases}
\end{equation}

If we denote $\IP^*$ the measure corresponding to the dynamics of the selected model, then via the Feynman-Kac formula, $\IE^{\IP^*}[e^{-\int_0^{\tilde \tau}r_s\,\diff s}\tilde G(Z_{\tilde \tau},r_{\tilde \tau})\big|Z_0=z,r_0=r]=\psi(0,z,r)$. 

\section{Numerical Method}
In this section, we outline the numerical method used to solve the dual formulation in Theorem~\ref{thm: dual problem}. We will use and adapt the methods presented in \textcite{guo2022calibration} and \textcite{guo2022joint} for ease of implementation. Analogously to those two works, in order to compute the supremum in \ref{thm: dual problem}, we must solve an HJB equation for a given $\lambda$, and then update the $\lambda$ using an optimisation algorithm. To speed up the convergence, we compute the gradients with respect to $\lambda$ of the dual objective function.
\begin{lemma}\label{lem: gradients}
Suppose Problem~\ref{prob: primal} is admissible and define the dual objective function as
\begin{equation}\label{eq: dual objective}
L(\lambda) = \lambda\cdot u - \phi^{\lambda}(0,Z_0,r_0).
\end{equation}
Then the gradients of the dual objective function are given by
\begin{equation}\label{eq: dual gradients}
\partial_{\lambda_i}L(\lambda) = u_i - \IE^{\IP}\bigg[e^{-\int_0^{\tau_i}\!r_s\,\diff s}G_i(Z_{\tau_i})\bigg].
\end{equation}
\end{lemma}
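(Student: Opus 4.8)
The plan is to differentiate the dual objective $L(\lambda) = \lambda \cdot u - \phi^\lambda(0, Z_0, r_0)$ in $\lambda_i$ by an envelope-type argument, using the fact that the HJB equation \eqref{eq: HW HJB} (or more abstractly \eqref{eq: dual HJB}) is linear in $\lambda$ through the source terms $\sum_i \lambda_i G_i \delta_{\tau_i}$. The subtlety is that the optimal $\beta_{11}^*$ (equivalently, the optimal characteristic $(\alpha^*,\beta^*)=\nabla F^*(\nabla_x\phi^*, \tfrac12\nabla_x^2\phi^*)$ from Lemma~\ref{lem: optimisers}) itself depends on $\lambda$, so one must check that the first-order contribution of that dependence vanishes. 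This is exactly the structure of an envelope theorem: at the optimum, the derivative of the supremand with respect to the maximising variable is zero, so only the explicit $\lambda$-dependence survives.

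First I would fix $\lambda$, let $\phi = \phi^\lambda$ solve \eqref{eq: HW HJB}, and let $\beta_{11}^*(t,z,r)$ be the optimal characteristic from Lemma~\ref{lem: sup beta}; set $\mathcal{L}^*$ to be the associated (discounted) generator appearing in the pricing PDE \eqref{eq: pricing PDE}. Perturb $\lambda \mapsto \lambda + \epsilon e_i$ and write $\phi^{\lambda+\epsilon e_i} = \phi + \epsilon \dot\phi + o(\epsilon)$, with $\dot\phi = \partial_{\lambda_i}\phi$. Differentiating \eqref{eq: HW HJB} in $\epsilon$ at $\epsilon = 0$: the explicit term contributes $G_i(x)\delta_{\tau_i}$, the $\partial_t\phi$ and $-r\phi$ terms contribute $\partial_t\dot\phi - r\dot\phi$, and the supremum term, by the first-order optimality condition $\partial_{\beta_{11}}(\cdots)|_{\beta_{11}^*}=0$, contributes only through the derivative of the supremand with respect to $\phi$ and its spatial derivatives evaluated at $\beta_{11}^*$ — that is, exactly $\mathcal{L}^*\dot\phi$ with the sign conventions of \eqref{eq: pricing PDE}, without any term involving $\partial_{\lambda_i}\beta_{11}^*$. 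Hence $\dot\phi = \partial_{\lambda_i}\phi$ solves the \emph{linear} PDE $\partial_t\dot\phi + \mathcal{L}^*\dot\phi - r\dot\phi + G_i(x)\delta_{\tau_i} = 0$ with terminal condition $\dot\phi(T,\cdot)=0$, i.e., a pricing PDE with a Dirac source at $\tau_i$.

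Next I would solve this linear PDE by Duhamel / Feynman–Kac: the Dirac source at time $\tau_i$ acts as a jump, so $\dot\phi(0, Z_0, r_0) = \IE^{\IP^*}\!\left[e^{-\int_0^{\tau_i} r_s\,\diff s} G_i(Z_{\tau_i}) \mid Z_0, r_0\right]$, where $\IP^*$ is the measure associated with the calibrated dynamics \eqref{eq:model SDE} with $\beta_{11}^*$ plugged in, exactly as in the Feynman–Kac statement following \eqref{eq: pricing PDE}. Since $\partial_{\lambda_i}(\lambda \cdot u) = u_i$, combining gives $\partial_{\lambda_i} L(\lambda) = u_i - \IE^{\IP}\!\left[e^{-\int_0^{\tau_i} r_s\,\diff s} G_i(Z_{\tau_i})\right]$, which is \eqref{eq: dual gradients}. (One can also see this directly from the saddle-point representation in Problem~\ref{prob: saddle pt}: at the primal optimum $(\rho^*, \cA^*, \cB^*)$, differentiating in $\lambda_i$ gives $u_i - \int_\Lambda G_i(x)\delta_{\tau_i}\,\diff\rho^* = u_i - \int_{\IR^d} G_i\,\diff\rho^*_{\tau_i}$, and $\rho^*_{\tau_i}$ is the discounted density, so $\int G_i \,\diff\rho^*_{\tau_i} = \IE^{\IP}[e^{-\int_0^{\tau_i} r_s \diff s} G_i(Z_{\tau_i})]$ by Lemma~\ref{lem:discountedFP}.)

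The main obstacle is rigorously justifying the differentiation under the optimum: one needs enough regularity of $\lambda \mapsto \phi^\lambda$ (differentiability of the viscosity solution in the parameter) and of the map $\beta_{11}^* \mapsto$ supremand to invoke the envelope argument, and one must handle the Dirac sources at the $\tau_i$ carefully within the $\mathrm{BV}_T$ framework of Definition~\ref{def: viscosity solution}. I would either appeal to the strong convexity of $F$ (which via Lemma~\ref{lem: optimisers} gives a unique, well-behaved optimal characteristic and hence Lipschitz/differentiable dependence of the HJB data on $\lambda$), or bypass pointwise PDE differentiation entirely and run the saddle-point argument above, which only requires that the primal infimum is attained (guaranteed by Theorem~\ref{thm: dual problem} when $V$ is finite, i.e., when Problem~\ref{prob: primal} is admissible) together with the marginal identity from Lemma~\ref{lem:discountedFP}.
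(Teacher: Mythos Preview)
Your proposal is correct and aligns with the paper's approach: the paper does not give a detailed proof but simply states that the gradients are obtained via the same method as in \cite[Lemma~4.5]{guo2022calibration}, with the discounting appearing in the expectation as a result of the $-r\phi$ term in the HJB equation from the Feynman--Kac formula. Your envelope-theorem argument (differentiate the HJB in $\lambda_i$, observe that the $\partial_{\beta_{11}}$ contribution vanishes at the optimum, identify $\partial_{\lambda_i}\phi$ as the solution of the linear pricing PDE \eqref{eq: pricing PDE} with source $G_i\delta_{\tau_i}$, and invoke Feynman--Kac) is precisely a fleshed-out version of that referenced method, and your alternative saddle-point derivation via Lemma~\ref{lem:discountedFP} is a nice bonus that sidesteps the regularity concerns you flag.
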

The gradients are obtained via the same method in \textcite[Lemma~4.5]{guo2022calibration}, but with the discounting appearing in the expectation as a result of the $-r\phi$ term in the HJB equation from the Feynman-Kac formula. The interpretation is the same here, that the gradients represent the difference between the model and market prices. We briefly outline the numerical method from \textcite{guo2022calibration} and \textcite{guo2022joint} which can be directly applied here. 
We are first given an initial guess $\lambda$, which will usually be taken to be a zero vector, and then solve (\ref{eq: HW HJB}) to obtain $\phi(0,Z_0,r_0)$. Since the HJB equation (\ref{eq: HW HJB}) has dirac deltas at the times $(\tau_i)_{i=1,\dots,n}$, the solution also has jump discontinuities in time. However, in-between these jump discontinuities, the solution is continuous so can be solved using standard techniques backwards in time, and then the jump discontinuities can be incorporated into a terminal condition at the expiration time. That is, we take the final jump discontinuity as a terminal condition, solve the HJB equation up to the next calibrating option expiry, and then add the jump discontinuity to the solution as a new terminal condition at the next jump discontinuity in the same way as given in Definition~\ref{def: viscosity solution} since we understand the solution in the viscosity sense. Then, given $\phi(0,Z_0,r_0)$, we calculate the objective value (\ref{eq: dual objective}), and apply an optimisation algorithm to update the $\lambda$ to a new guess. As in \textcite{guo2022calibration} and \textcite{guo2022joint}, the L-BFGS algorithm of \textcite{liu1989limited} displayed good convergence, and with the gradients calculated in Lemma~\ref{lem: gradients} the convergence can be made faster. We remark that to compute (\ref{eq: dual gradients}), the $\IE^{\IP}\bigg[e^{-\int_0^{\tau_i}\!r_s\,\diff s}G_i(Z_{\tau_i})\bigg]$ term is simply the model price of the $i^{\mathrm{th}}$ option. This expectation can either be computed via Monte Carlo methods, or using standard pricing PDE techniques by solving (\ref{eq: pricing PDE}) with $\beta_{11}^*$ obtained from the solution of the HJB equation. This process is then repeated until $||\nabla_{\lambda}L(\lambda)||_{\infty}<\epsilon$ for some specified tolerance $\epsilon>0$, which corresponds to the model prices matching all of the market prices.

We perform the constant rescaling in the short rate variable $r_t\mapsto R r_t$ where we choose $R=100$ for stability reasons in the finite difference approximations to make it of the same order as the other coordinate. In addition, we rescale the calibrating option prices and their payoffs by their vegas computed from their Black-Scholes implied volatility. This not only helps the stability of the numerical method by reducing the magnitude of the jump discontinuities, but it also converts pricing errors into implied volatility errors since the vega represents how much the option price will change as the volatility changes by $1\%$. 
\subsection{Numerically Solving the HJB Equation}
We now outline how to solve the HJB equation (\ref{eq: HW HJB}). In \textcite{barles1991convergence}, it has been shown that monotonicity, stability and consistency of a numerical scheme guarantees convergence locally uniformly so long as there exists a comparison principle for the analytic solution. We use a policy iteration method (see \textcite{ma2017unconditionally}) similar to that in \textcite{guo2022calibration} and \textcite{guo2022joint}, where we solve the HJB equation using an implicit finite difference method, with central difference approximations for the spatial derivatives from \textcite{in2010adi}. We choose a boundary far away enough such that the boundary conditions do not have a significant effect on the HJB equation solution, and our boundary conditions are such that the second derivative of $\phi$ does not change with time between each maturity. That is, for a subsequence of the calibrating option maturity times $(\tau_{i_k})_{k=1,\dots,m}$ such that for $k=1,\dots,m$ all $\tau_{i_k}$ are distinct, with $\tau_{i_0}=0$, we set for $t\in (\tau_{i_{k-1}},\tau_{i_k}]$, for each $k=1,\dots,m$, 
\begin{align}
\partial^2_{zz}\phi(t,z,r) &= \partial^2_{zz}\phi(\tau_i,z,r),\quad\text{for }z\in\{z_{\mathrm{min}},z_{\mathrm{max}}\},\:r\in [r_{\mathrm{min}},r_{\mathrm{max}}],\label{eq: HJB BC1}\\
\partial^2_{rr}\phi(t,z,r)&= \partial^2_{rr}\phi(\tau_i,z,r),\quad\text{for }z\in[z_{\mathrm{min}},z_{\mathrm{max}}],\:r\in \{r_{\mathrm{min}},r_{\mathrm{max}}\}.\label{eq: HJB BC2}
\end{align}
We started with this boundary condition and compared it with constant Dirichlet boundary conditions $\phi(t,x)=C_{i_k}$, for some constant $C_{i_k}\in\IR$, all $t\in(\tau_{i_{k-1}},\tau_{i_k}]$, and $x$ on the boundary of our rectangular domain, and observed no significant change. In addition, we use the boundary conditions \eqref{eq: HJB BC1}-\eqref{eq: HJB BC2} in the linearised pricing PDE \eqref{eq: pricing PDE}. At each time step, we start with the value of $\phi$ at the previous time step, we then approximate the PDE coefficients $\alpha$ and $\beta$ using Lemma~\ref{lem: optimisers} and Lemma~\ref{lem: sup beta}, and then solve the HJB equation at that time step using one step of a fully implicit spatially second order finite difference scheme. We then use Lemma~\ref{lem: optimisers} and Lemma~\ref{lem: sup beta} again at the same time step to approximate the PDE coefficients again with the new value of $\phi$. This process is repeated until convergence within some specified tolerance, after which we procede to the next time step. Once we have solved the HJB equation at all time steps, we then use the value of $\beta_{11}$ computed from solving the HJB equation, compute $\beta_{12}$ from the convexity adjustment, and use them to solve the linearised model pricing PDE via the ADI method to generate the model prices. In the simulated data example, we use a discretisation on a uniform $100\times 100$ spatial grid of $[4,5]\times [0,5]$ for the log-stock and rescaled interest rates, and partition the time interval into year fractions at the resolution of one day, so that $\diff t = \frac{1}{365}$. We use the minimisation package, minfunc, of \textcite{schmidt2005minfunc} for the implementation of L-BFGS.

The above numerical method is summarised in Algorithm 1. We let $0=t_0<t_1<\dots<t_N=T$ be a discretisation of the time interval $[0,T]$ such that the expiration times of the calibrating instruments $(\tau_i)_{i=1,\dots,n}$ are a subset of the discretisation times. We let $\epsilon_1$ be the tolerance of the model prices to the calibrating prices, so that $||\nabla_{\lambda} L(\lambda)||_{\infty}<\epsilon_1$, where the gradients are given in Lemma~\ref{lem: gradients}, and we let $\epsilon_2$ be the tolerance of the policy iteration for approximating the optimal characteristics.\par

\begin{algorithm}[!ht]
\caption{Policy iteration algorithm.}
\DontPrintSemicolon
\SetNoFillComment
\KwData{Input an initial $\lambda$ and market prices $u_i$.}
\KwResult{Calibrated model prices, optimal characteristics}
\While{$||\nabla_{\lambda}L(\lambda)||_{\infty}>\epsilon_1$}{
\tcc{Solve the HJB equation backwards in time}
\For{$k=N-1,\dots,0$}{
\tcc{Terminal Conditions - adding $\lambda$ multiplied by the payoff}
\If{$t_{k+1}=\tau_i$ for some $i=1,\dots,n$}{
$\phi_{t_{k+1}}\gets\phi_{t_{k+1}}+\sum_{i=1}^n\lambda_iG_i\indic{\{t_{k+1}=\tau_i\}}$\
}
\tcc{Policy iteration to approximate the optimal characteristics}
$\phi^{\mathrm{new}}_{t_k}\gets\phi_{t_{k+1}}$\tcp*[r]{Approximate using previous time step}
\While{$||\phi^{\mathrm{new}}_{t_k}-\phi^{\mathrm{old}}_{t_k}||>\epsilon_2$}{
$\phi^{\mathrm{old}}_{t_k}\gets\phi^{\mathrm{new}}_{t_k}$\tcp*[r]{Store the old value of $\phi$}\
Approximate $\beta^*_{11}$ using Lemma~\ref{lem: optimisers} and Lemma~\ref{lem: sup beta} with $\phi^{\mathrm{old}}_{t_k}$. \tcp*[r]{Use old values to approximate optimal characteristics}\
Use $\beta^*_{11}$ to compute $\alpha^*_1$ and $\beta^*_{12}$, then plug into (\ref{eq: HW HJB}) to remove the supremum and solve using one step of an implicit finite difference method, and set the solution to $\phi^{\mathrm{new}}_{t_k}$.
}
$\phi_{t_k}\gets\phi^{\mathrm{new}}_{t_k}$\tcp*[r]{Save the solution once the $\phi$ has converged to the optimal solution}
}
\tcc{Computing the model prices and gradients}
Compute the model prices by solving the pricing PDE (\ref{eq: pricing PDE}) using the ADI method.\;
Compute the gradients (\ref{eq: dual gradients}).\;
Use the L-BFGS algorithm to update $\lambda$.
}
\end{algorithm}
\subsection{Numerical Results}\label{sec: numerical results}
We present numerical results showcasing the performance of our proposed calibration method. We use simulated data to investigate the advantages and drawbacks of our method, and in particular its dependence on the reference model $\bar{\sigma}$ in \eqref{eq: cost function}. We use the CEV (constant elasticity of variance) model of \textcite{cox1996constant}, with different sets of parameters for the model used to generate the option prices and for our reference model. The underlying $S_t = \exp(Z_t)$, $0\leq t\leq T$, thus solves the following stochastic differential equation:
\begin{equation*}
\diff S_t = r_t S_t\diff t + \sigma(t,S_t)S_t\diff W^1_t,
\end{equation*}
with $\sigma(t,S_t) = \sigma S_t^{\gamma - 1}$, where $\sigma\geq 0$ and $\gamma\geq 0$ are both constants. We remark that this is a special case of the SABR (``stochastic $\alpha$, $\beta$, $\rho$'') model derived in \textcite{hagan2002managing} as a stochastic volatility extension of the CEV model.

We solve a pricing PDE to compute the generating model prices and consider the following instruments:
\begin{enumerate}
\item calls on the underlying with an expiration of 60 days;
\item calls on the underlying with an expiration of 120 days.
\end{enumerate}

We note that the payoffs of these options are not smooth and cause instabilities when computing the derivatives in the terminal conditions of the HJB equation. Thus, we use the smoothed version of the call option payoffs, which for $\epsilon\ll K$ is given by:
\begin{equation*}
(S_T-K)^+\approx\frac{S_T-K}{2}\Big[\mathrm{tanh}\Big(\frac{S_T-K}{\epsilon}\Big)+1\Big].
\end{equation*}
We keep the interest rate model parameters the same throughout since that is assumed to be given. 

We present two numerical examples, the ``good'' reference model where the initial reference model is parametrically close to the generating model, and the ``bad'' reference model where it is not parametrically close to the generating model and in particular has a correlation with a different sign. Note that in the extreme case when the generating model and the reference model are the same, the calibration procedure will stop instantly and recover the generating model. The parameters for all the models are summarised in Table \ref{table: HW CEV parameters}.
\begin{table}[htp]
\centering
\begin{tabular}{|l|l|p{0.55\linewidth}|}
\hline
Parameter & Value & Interpretation\\
\hline
$Z_0$ & $\log(92)$ & Initial log-underlying price\\
$r_0$ & $0.025\times 100$ & Initial short rate scaled by $R=100$\\
$\epsilon_1$ & $1\times 10^{-4}$ & Tolerance for the difference in model and market implied volatility\\
$\epsilon_2$ & $1\times 10^{-8}$ & Tolerance for the policy iteration approximation of the optimal characteristics\\
$p$ & 4 & Exponent in the cost function\\
\hline
$\sigma$ & 0.78 & Volatility scaling of generating CEV model\\
$\gamma$ & 0.9 & Power law in generating CEV model\\
$\theta(t)$ & $ar_0+\frac{\sigma_r^2}{2a}(1-e^{-2at})$ & Initial term structure of Hull-White generating model\\
$a$& 0.4 & Speed of mean reversion of Hull-White generating model\\
$\sigma_r$ & 0.05 & Volatility of Hull-White generating model\\
$\xi$ & $-0.6$  & Instantaneous correlation between short rate and log-stock in generating model\\
\hline
$\overline{\sigma}_{\mathrm{good}}$ & 0.9 & Volatility scaling of the ``good'' reference CEV model\\
$\overline{\gamma}_{\mathrm{good}}$ & 0.9 & Power law in the ``good'' reference CEV model\\
$\overline{\xi}_{\mathrm{good}}$ & $-0.4$ & Instantaneous correlation between short rate and log-stock in the ``good'' reference model\\
\hline
$\overline{\sigma}_{\mathrm{bad}}$ & 1.2 & Volatility scaling of the ``bad'' reference CEV model\\
$\overline{\gamma}_{\mathrm{bad}}$ & 0.78 & Power law in the ``bad'' reference CEV model\\
$\overline{\xi}_{\mathrm{bad}}$ & 0.4 & Instantaneous correlation between short rate and log-stock in the ``bad'' reference model\\
\hline
\end{tabular}
\caption{Parameter values for the simulated data example.}
\label{table: HW CEV parameters}
\end{table}
Our numerical methods converged for both the ``good'' and ``bad'' reference model case, calibrating all the call options to a tolerance of $10^{-4}$, with the calibrated model implied volatility replicating the prices of the generating model. The generating and calibrated model implied volatility skews are indistinguishable for both maturities within the range $K=[85,120]$ in which our options' strikes were taken, indicating that our optimal transport model replicates the observed simulated data both at and in-between the calibrating option strikes. The dependence on the reference model is only observed outside of this range, which is to be expected. The plots are given in Figure \ref{fig:SPX IVOL} with the exact results summarised in Table~\ref{table: simulated data}. 

A feature of our method is that it is designed to find a calibrated model closest to the reference model, in the sense of minimising \eqref{eq: objective function}. This results in a spiky volatility surface -- the method tries to stick to the reference model whilst making sharp deviations needed to calibrate to the given options' prices. This is not necessarily a desired feature and we propose to smooth the surface to obtain a reasonable volatility surface while matching the market data. We used an iterative procedure:  once we obtained convergence of the model prices to the generating model prices within the tolerance of $\epsilon_1$, we then stored those characteristics as the new reference model, applied an interpolation method using cubic splines to smooth the surface and then restarted the process with the smoothed surface as our reference model.  We remark that the final epoch of the calibration algorithm involved no interpolation, as this spoils calibration, so the overall algorithm provided the OT calibrated surface after iterating through smoothed reference models. In addition to providing more reasonable model dynamics, the smoothed reference model iteration also improved the numerical stability of the algorithm, and thus allowed us to achieve a better calibration. This came at a significant computational cost due to the iterations of the calibration routine, however both ran on a standard notebook laptop in a matter of hours. Both reference models were smoothed around 10 times before changes in the volatility surface plots were no longer noticeable.

We remark that the aforementioned reference model iteration could, in principle, be circumvented via a regularisation technique by encoding a smoothing penalty into the cost function, such as via a second order Tikhonov regularisation method. Clearly this would change the optimal characteristics given in (\ref{eq: optimal beta11}), but the optimal surface would then have this smoothing penalty encoded into it. However, this would require at all gridpoints to estimate the first and second derivatives of $\beta_{11}$ via a central difference method, which would substantially increase the computational load beyond that of the reference model iteration method, and additionally potentially require extra state variables to account for the spatial derivatives of $\beta$.

\begin{figure}[htp]
\centering
	\subfigure[]{\includegraphics[width=0.48\textwidth]{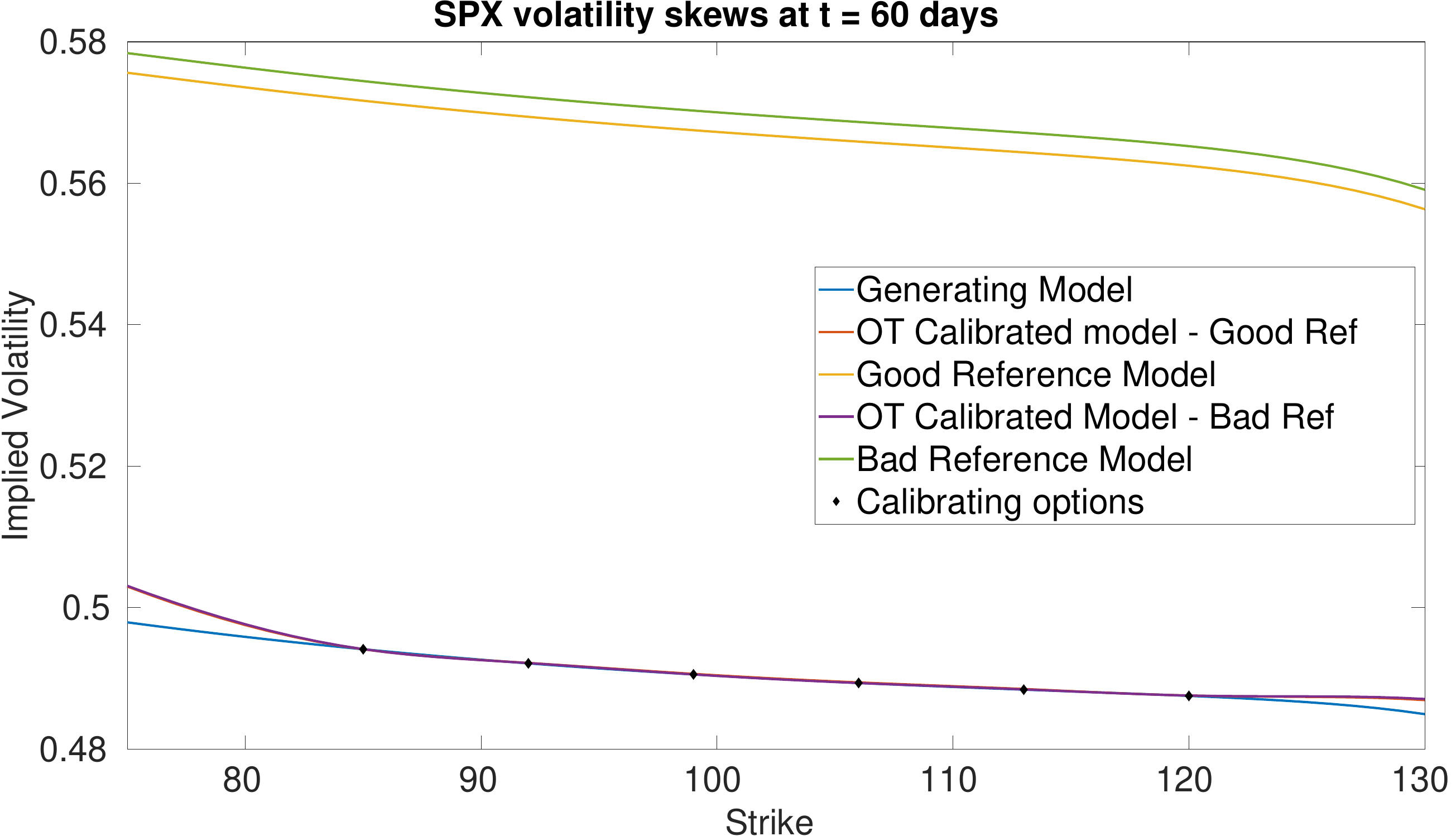} }
	\subfigure[]{\includegraphics[width=0.48\textwidth]{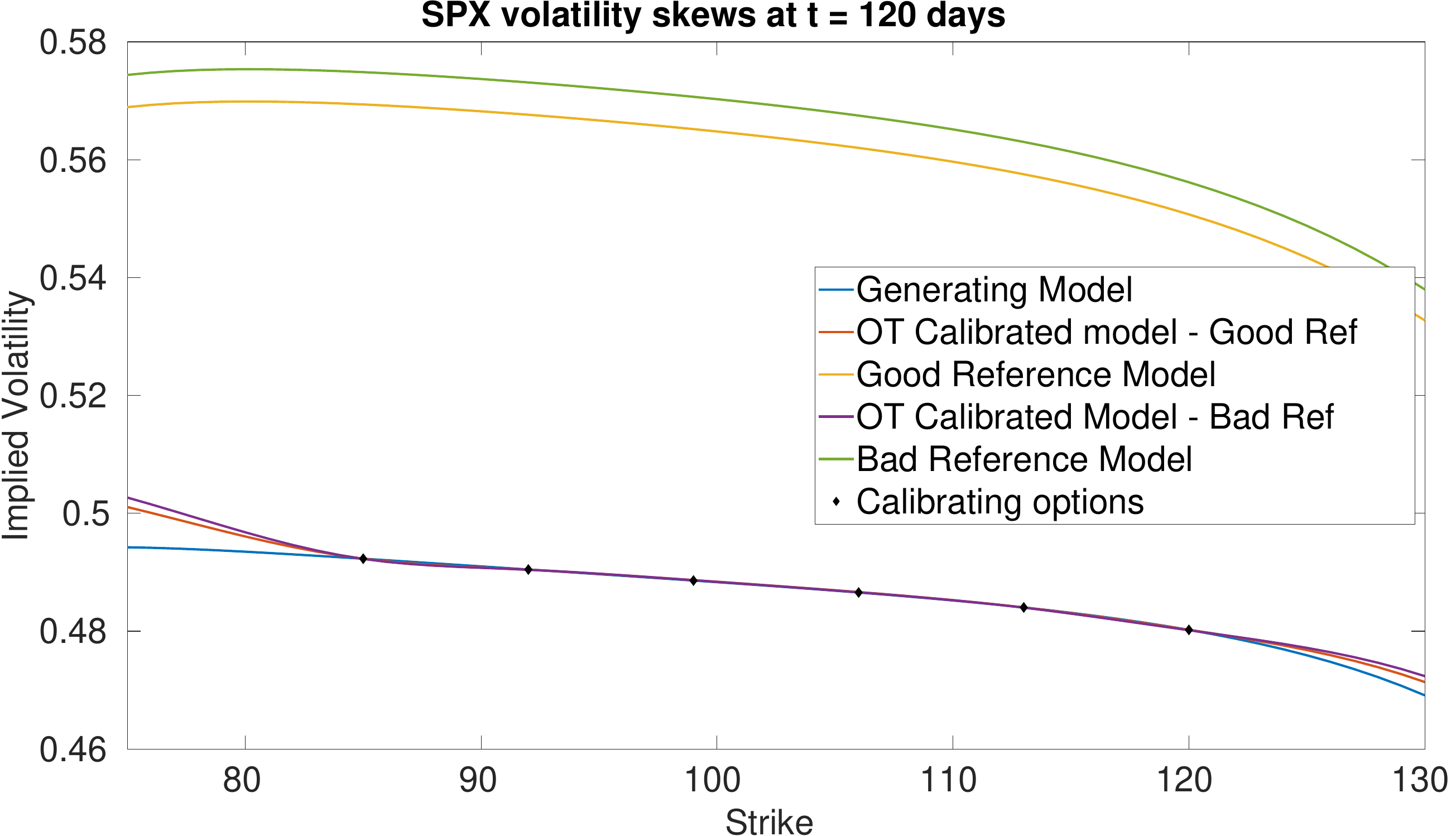} }
	\caption{Implied volatility skews under the generating model, reference model and OT-calibrated models with both reference models and across two maturities.}
	\label{fig:SPX IVOL}
\end{figure}


\begin{table}[htp]
\centering
\scalebox{0.95}{
\begin{tabular}{|c|c|c|c|c|c|c|c|}
\hline
 & &\multicolumn{2}{c|}{Generating Model} & \multicolumn{2}{c|}{Calibrated Model:} & \multicolumn{2}{c|}{Calibrated Model:}\\
& & \multicolumn{2}{c|}{} &\multicolumn{2}{c|}{Good Reference} & \multicolumn{2}{c|}{Bad Reference}\\
\hline
Option Type&Strike&Price&IV&Price&IV&Price&IV\\
\hline
\multirow{6}{*}{\shortstack[1]{SPX Call options\\at $t=60$ days}} & 85 & 11.3666 & 0.4941 & 11.3666 & 0.4941 & 11.3668 & 0.4941\\
& 92 & 7.5389 & 0.4921 & 7.5398 & 0.4922 & 7.5396 & 0.4922\\
& 99 & 4.7538 & 0.4906 & 4.7549 & 0.4906 & 4.7537 & 0.4905\\
& 106 & 2.8616 & 0.4893 & 2.8625 & 0.4894 & 2.8613 & 0.4893\\
& 113 & 1.6523 & 0.4884 & 1.6532 & 0.4885 & 1.6526 & 0.4884\\
& 120 & 0.9189 & 0.4875 & 0.9192 & 0.4876 & 0.9192 & 0.4876\\
\hline
\multirow{6}{*}{\shortstack[1]{SPX Call options\\at $t=120$ days}} & 85 & 14.2787 & 0.4923 & 14.2787 & 0.4923 & 14.2780 & 0.4923\\
& 92 & 10.7017 & 0.4905 & 10.7007 & 0.4904 & 10.7009 & 0.4904\\
& 99 & 7.8563 & 0.4886 & 7.8580 & 0.4887 & 7.8575 & 0.4886\\
& 106 & 5.6560 & 0.4866 & 5.6575 & 0.4866 & 5.6568 & 0.4866\\
& 113 & 3.9917 & 0.4840 & 3.9918 & 0.4840 & 3.9910 & 0.4840\\
& 120 & 2.7493 & 0.4802 & 2.7495 & 0.4802 & 2.7483 & 0.4802\\
\hline
\end{tabular}}
\caption{Table of the generating and calibrated model prices and implied volatilities.}\label{table: simulated data}
\end{table}

To better illustrate the features of the OT-calibrated model, we present plots of the surfaces of the model characteristics, both on their own -- see Figures \ref{fig: beta11} and \ref{fig: xi} -- and superimposed with the characteristics of the generating and reference models for comparison, see Figures \ref{fig: beta11 together} and \ref{fig: xi together}. Broadly speaking, as discussed above, the OT-calibrated model is close to the generating one in the region specified by the data and close to the reference one otherwise. 
In addition, the surfaces we obtain for the correlation $\xi$ are highly dependent on the reference model, as one may expect from the convexity adjustment in (\ref{eq: convexity adjustment}). 
The choice of reference value will therefore mean that $\xi$ is always close to the reference model in this case with some fluctuations obtained through the change in the value $\beta_{11}$.
Consequently, the modeller's (or trader's) insight in specifying a correct correlation (not least its sign) is, relatively speaking, more important as the vanilla option prices' market data alone will not necessarily help to correct a widely wrong reference guess. 
\begin{figure}[H]
\centering
\includegraphics[width=0.85\textwidth]{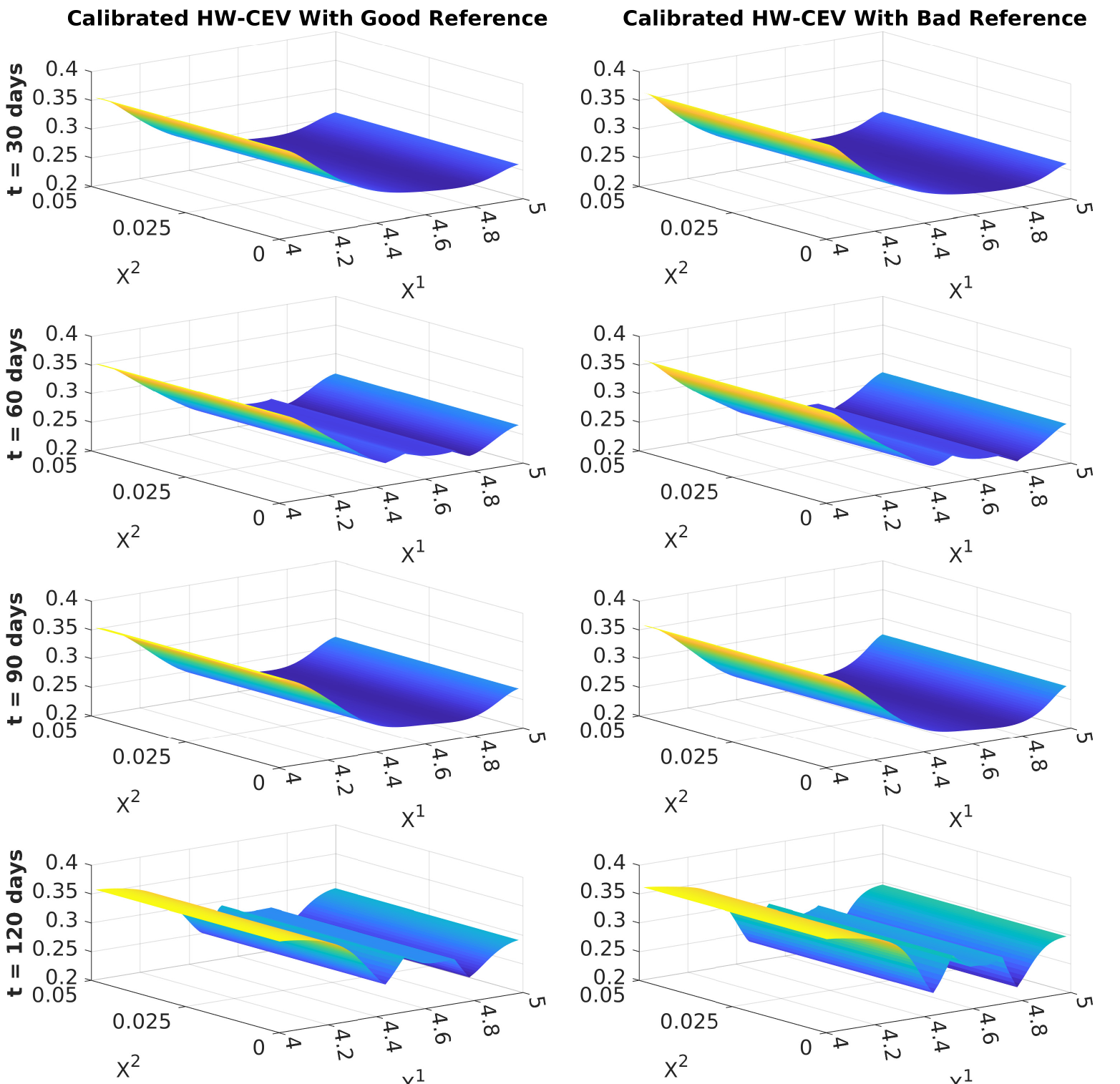}
\caption{Plots of $\beta_{11}$ at $t=30$, 60, 90, and 120 days for the calibrated model.}
\label{fig: beta11}
\end{figure}
\begin{figure}[H]
\centering
\includegraphics[width=0.85\textwidth]{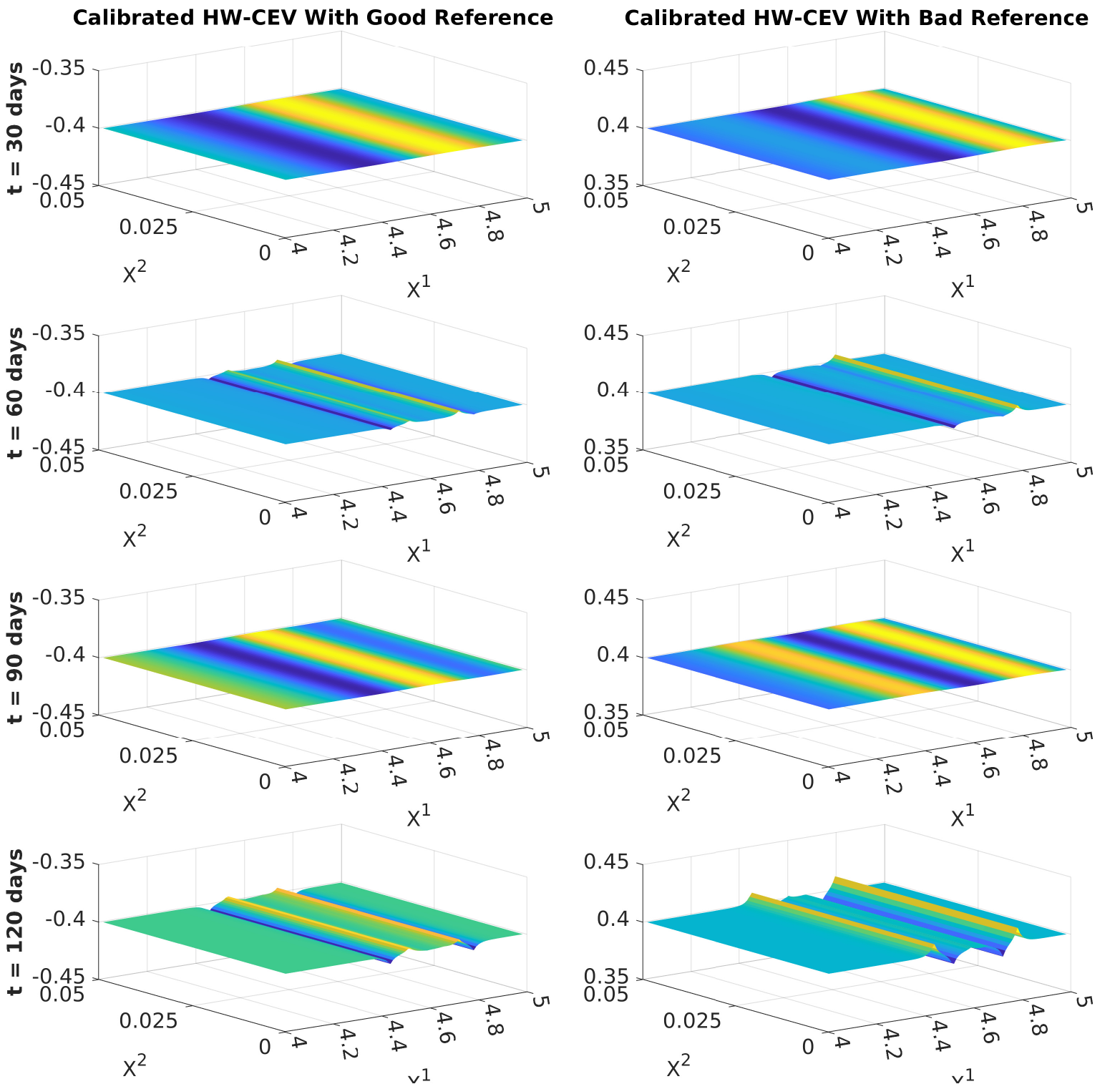}
\caption{Plots of $\xi$ at $t=30$, 60, 90, and 120 days for the calibrated model.}
\label{fig: xi}
\end{figure}

\begin{figure}[H]
\centering
\includegraphics[width=0.85\textwidth]{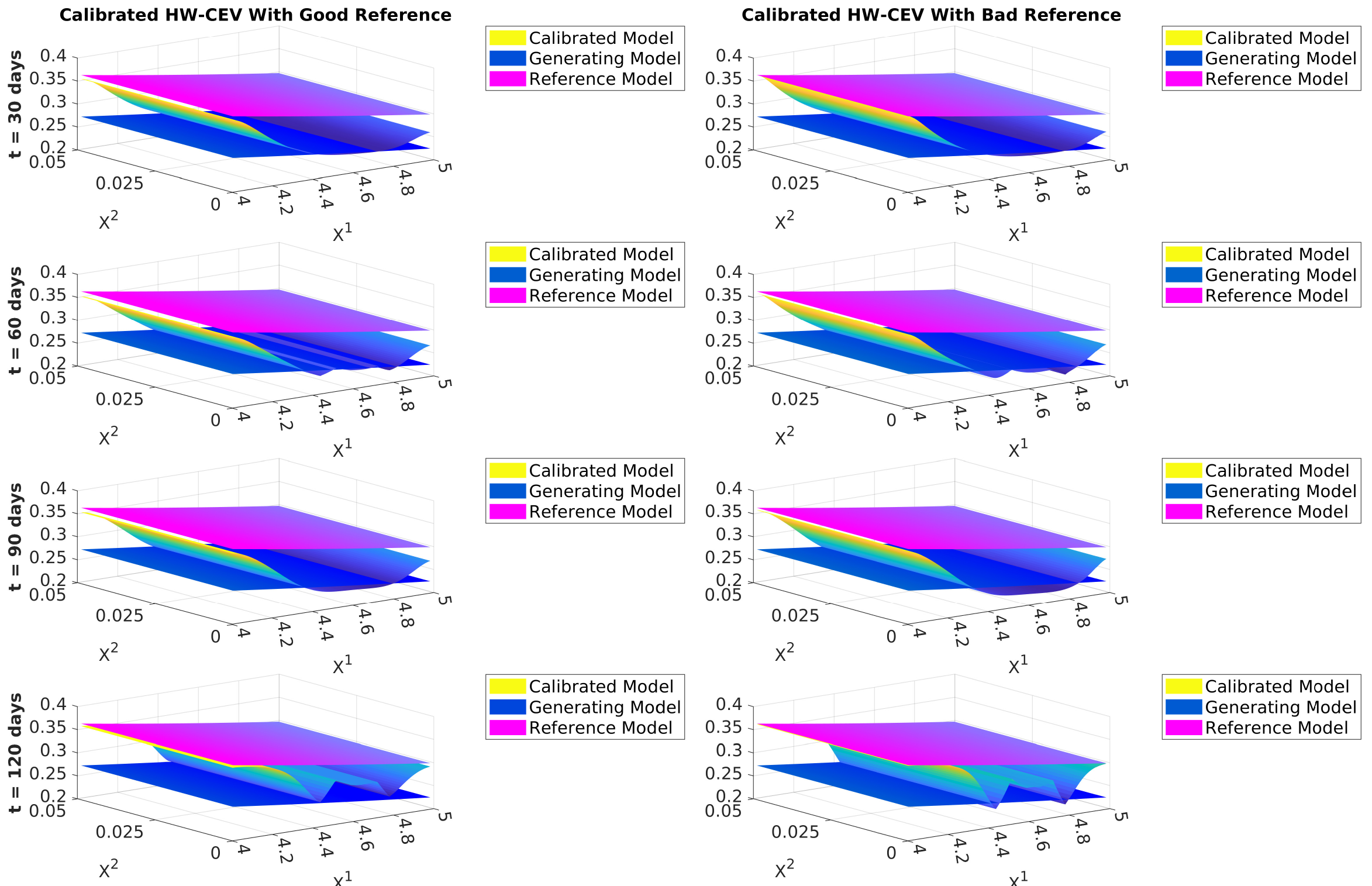}
\caption{Plots of $\beta_{11}$ at $t=30$, 60, 90, and 120 days for the calibrated model compared with the generating model.}
\label{fig: beta11 together}
\end{figure}
\begin{figure}[H]
\centering
\includegraphics[width=0.85\textwidth]{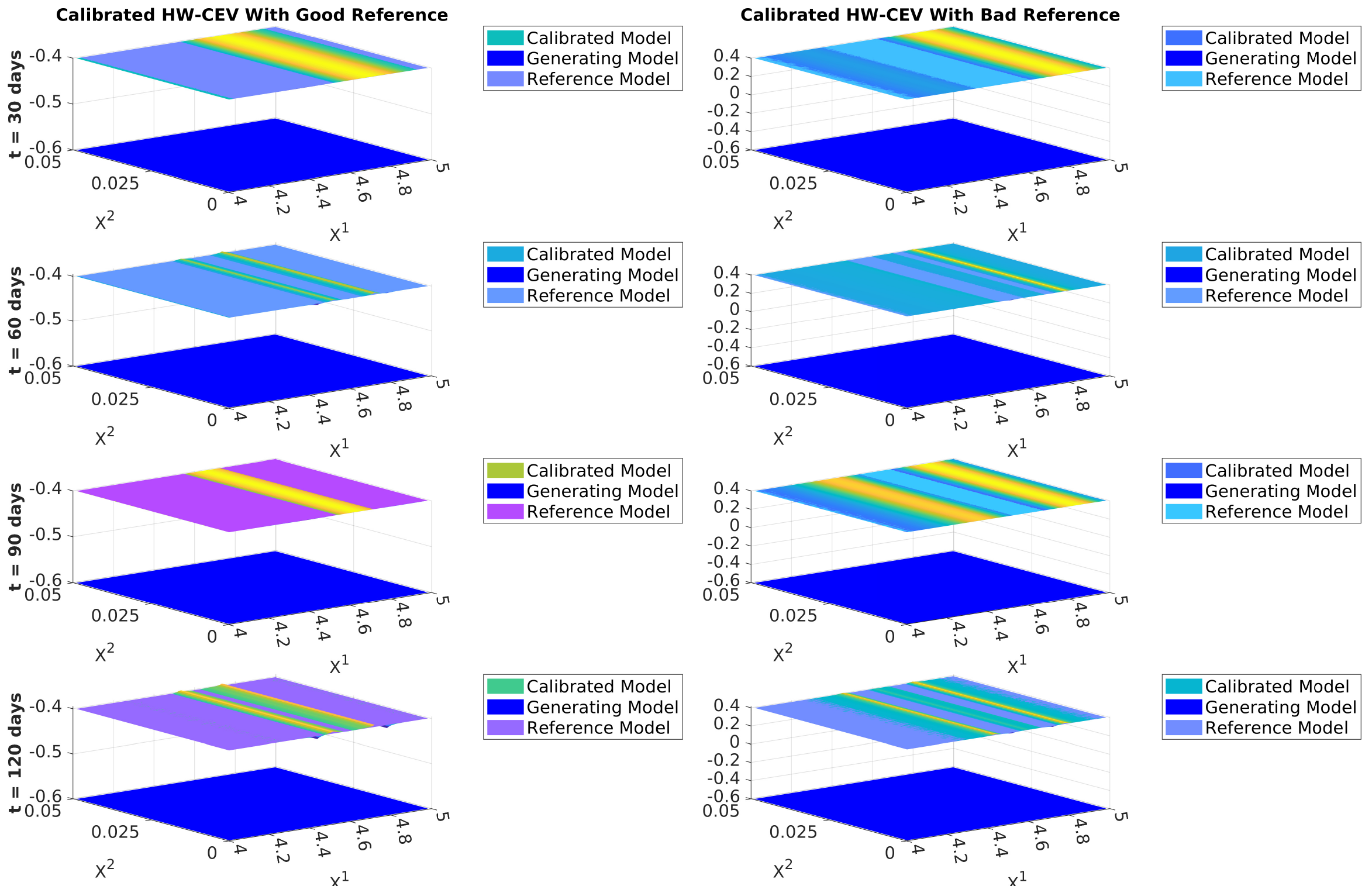}
\caption{Plots of $\xi$ at $t=30$, 60, 90, and 120 days for the calibrated model compared with the generating model.}
\label{fig: xi together}
\end{figure}
\section{Conclusions}
This paper developed a non-parametric optimal-transport driven method to calibrate stock price model in a stochastic interest rate environment. By switching to sub-probability measures representing discounted densities the method manages not to increase the dimension of the state variables. The method is flexible and can accommodate different calibrating instruments. We developed a generic duality result and applied it in the context of Hull-White short rate model and a local volatility stock price model with short rate dependence. We illustrated the numerical performance of the method considering a sequential calibration problem: the short rate model is calibrated first in a parametric setting and it then feeds into our non-parametric local volatility calibration. We highlighted how some model features are pinned down well via the market prices of options. However others, notably the correlation between the Brownian motions driving the rates and the stock price, are much more sensitive to the modeller's choice of their reference model. In a follow up paper \textcite{joseph2023joint}, we apply the duality result developed here to consider simultaneous OT-driven calibration for a joint model for the stochastic interest rates and the stock price process, and demonstrate its performance on real market data. We note that here and in \textcite{joseph2023joint}, we restrict ourselves to short-rate models to keep the dimension of the HJB equation to two state variables. It would be natural and interesting to extend this setup to include more stochastic factors, e.g., stochastic volatility for the stock prices. Each extra factor would raise the dimension of the HJB and imply significant numerical challenge. 
It would also be of interest to consider more realistic fixed income models, such as market models of \textcite{brace1997market,miltersen1997closed,jamshidian1997libor} or more recent multi-curve framework models, see \textcite{henrard2007irony,henrard2010irony}, where the post 2008 financial crisis effects of the LIBOR-OIS spreads are taken into account. This however would drastically increase the dimension of our state process. It would likely necessitate novel tools to solve the resulting PDEs, such as recent machine learning methods, see \textcite{han2018solving,weinan2021algorithms}. We believe these are very interesting avenues for future research. 
\section*{Competing Interests}
The authors declare no competing interests.

\appendix \normalsize
\section{Proof of the Discounted Superposition Principle}
First, we define two operators associated with the discounted Fokker-Planck equation \eqref{eq: discount fp} and the augmented Fokker-Planck equation \eqref{eq: MP fokker planck}. Given $a:[0,T]\times\IR^d\to\IR^d$, $b:[0,T]\times\IR^d\to\IS^d_+$, $c:[0,T]\times\IR^d\to\IR$, define
\begin{align*}
C([0,T];\IR^d)\ni f&\mapsto \mathcal{L}^Df=a\cdot\nabla_xf+\frac{1}{2}b:\nabla^2_xf-cf,\\
C([0,T];\IR^{d+1})\ni f&\mapsto \mathcal{L}^Af=a\cdot\nabla_xf+\frac{1}{2}b:\nabla^2_xf+yc\partial_yf.
\end{align*}
Note that if $(\nu_t)_{t\in[0,T]}\subset\cP(\IR^{d+1})$ solves the augmented Fokker-Planck equation associated with $\mathcal{L}^A$, denoted $\mathrm{aFP}(a,b,c)$, then $(\nu_t^D)_{t\in[0,T]}\subset\cM(\IR^d)$ with $\nu_t^D=\int_{\IR}y\nu_t(\cdot,\diff y)$ solves the discounted Fokker-Planck equation associated with $\mathcal{L}^D$, denoted $\mathrm{dFP}(a,b,c)$. This is easily seen since the coefficients $a,b,c$ have no $y$-dependence. 

We construct a sequence of smooth solutions to \eqref{eq: discount fp}, $(\nu^n_t)_{t\in[0,T]}\subset\cM(\IR^d)$, with coefficients $a^n\in C^{\infty}([0,T]\times\IR^d;\IR^d)$, $b^n\in C^{\infty}([0,T]\times\IR^d;\IS^d_+)$, and $c^n\in C^{\infty}([0,T]\times\IR^d;\IR)$ for which we can easily show Theorem~\ref{thm: discounted superposition}. We therefore can construct a sequence of diffusion operators $\mathcal{L}^{A,n}$ corresponding to $\mathrm{aMP}(a^n,b^n,c^n)$ which has solution $\boldsymbol{\eta}^n$ such that $\int_{\IR}y\eta^n_t(\cdot,\diff y)=\nu^n_t$. The tightness and convergence of $\boldsymbol{\eta}^n$ is then shown, and then we use the arguments to generalise the coefficients $a,b,c$.
\subsection{Smooth Approximation of the Augmented Martingale Problem}\label{sec: smooth approx}
\begin{lemma}\label{lem: smooth superposition}
Let $(\nu_t)_{t\in[0,T]}\subset\cM(\IR^d)$ be a smooth solution of $\mathrm{dFP}(a,b,c)$ with coefficients $a\in C^{\infty}([0,T]\times\IR^d;\IR^d)$, $b\in C^{\infty}([0,T]\times\IR^d;\IS^d_+)$, $c\in C^{\infty}([0,T]\times\IR^d;\IR)$. Then there exists a solution $\boldsymbol{\eta}\in\cP(C([0,T];\IR^{d+1}))$ to $\mathrm{aMP}(a,b,c)$ such that $\int_{\IR}y\eta_t(\cdot,\diff y)=\nu_t(\cdot)$.
\end{lemma}
\begin{proof}
Let $\IP\in\cP_1$ and define the process
\begin{equation}\label{eq: localised SDE2}
\begin{cases}
\diff \tilde{X}_t=a_t(\tilde{X}_t)\diff t + [b_t(\tilde{X}_t)]^{\frac{1}{2}}\diff W^{\tilde{\IP}}_t,& 0\leq t\leq T,\\
\tilde{X}_0=x_0.&
\end{cases}
\end{equation}
Since $a$ and $b$ are taken to be smooth, we automatically have existence and uniqueness of \eqref{eq: localised SDE2} (see \textcite[Corollary~6.3.3]{stroock1979multidimensional}). Define $\tilde{Y}_t=\exp(-\int_0^t c_s(\tilde{X}_s)\diff s)$. Then, by the same reasoning as Theorem~\ref{lem: Markovian projection}, we have that the curve of marginal laws of the augmented process $(\tilde{X},\tilde{Y})$ at $t$, given by $\eta_t=(e_t)_{\#}\boldsymbol{\eta}$ solves the Fokker-Planck equation for $(t,x,y)\in[0,T]\times\IR^{d+1}$
\begin{equation*}
\partial_t\eta_t(x,y) + \nabla_x\cdot[\eta_t(x,y)a_t(x)] -\frac{1}{2}\nabla^2_x:[\eta_t(x,y)b_t(x)]-\partial_y[yc_t(x)\eta_t(x,y)]=0.
\end{equation*}
We now define the discounted version of $\eta_t$ by $\eta_t^D(x)\coloneqq\int_{\IR}y\eta_t(x,y)\,\diff y$, then by a similar calculation to that in the proof of Lemma~\ref{prop:discountedFP}, we have that $(\eta_t^D)_{t\in [0,T]}$ solves the discounted Fokker-Planck equation \eqref{eq: discount fp}. By a similar argument to \textcite[Proposition~4.1]{figalli2008existence}, we have uniqueness of \eqref{eq: discount fp}, so $\eta^D_t=\nu_t$. By It\^{o}'s formula, it is immediate that $\boldsymbol{\eta}\in\cP(C([0,T];\IR^{d+1}))$ solves $\mathrm{aMP}(a,b,c)$, so the result follows.
\end{proof}
We let $(\nu_t)_{t\in[0,T]}\subset\cM(\IR^d)$ be a narrowly continuous solution of $\mathrm{dFP}(a,b,c)$ for $a\in L^1(\diff\nu_t\diff t;\IR^d)$, $b\in L^1(\diff\nu_t\diff t;\IS^d)$, and $c\in L^1(\diff\nu_t\diff t;\IR)$. We first build a smooth sequence $\nu_t^n$ approximating $\nu_t$ for all $t\in[0,T]$, associated with diffusion operators $\mathcal{L}^{D,n}$. Then by Lemma~\ref{lem: smooth superposition}, there exists a sequence of solutions to the augmented martingale problem, $\boldsymbol{\eta}^n$, associated with $\mathcal{L}^{A,n}$, such that $\int_{\IR}y\eta_t^n(\cdot,\diff y)=\nu_t^n$. As in \textcite[Section~A.1]{trevisan2016well}, we construct the approximations in two different ways for the proof of the general case.

\noindent
\textbf{Pushforward via Smooth Maps:}

Let $\pi=(\pi^1,\dots,\pi^d)\in C^2(\IR^{d};\IR^{d})$ with uniformly bounded first and second derivatives. Define the standard diffusion operator, $\mathcal{L}^S$ by $C([0,T];\IR^d)\ni f\mapsto\mathcal{L}^S=a\cdot\nabla_xf+\frac{1}{2}b:\nabla^2_xf$, then we can define the approximations of $(a,b,c)$ by
\begin{equation*}
\pi(a)^i_t=\frac{\diff\pi_{\#}[\mathcal{L}^S(\pi^i)\nu_t]}{\diff\pi_{\#}\nu_t},\quad\pi(b)_t^{i,j}\coloneqq\frac{\diff\pi_{\#}[b_t:(\nabla\pi^i\otimes\nabla\pi^j)\nu_t]}{\diff\pi_{\#}\nu_t},\quad\pi(c)_t\coloneqq\frac{\diff\pi_{\#}[c_t\nu_t]}{\diff\pi_{\#}\nu_t}.
\end{equation*}
Note that $\pi$ preserves uniform bounds on $a,b,c$. Additionally, by an application of the chain rule to compute $\mathcal{L}^D(f\circ\pi)$, it is easy to see that $(\pi_{\#}\nu_t)_{t\in[0,T]}\subset\cM(\IR^d)$ solves $\mathrm{dFP}(\pi(a),\pi(b),\pi(c))$, and therefore by Lemma~\ref{lem: smooth superposition} there exists $\boldsymbol{\eta}$ a solution to $\mathrm{aMP}(\pi(a),\pi(b),\pi(c))$ such that $\pi_{\#}\nu_t = \int_{\IR}y\eta_t(\cdot,\diff y)$.

\noindent
\textbf{Mollification by Convolutions:}

Let $\kappa\geq 0$ be a smooth probability density on $\IR^{d}$. Define
\begin{equation*}
a^{\kappa}_t\coloneqq\frac{\diff [(a_t\nu_t)*\kappa]}{\diff[\nu_t*\kappa]},\qquad b^{\kappa}_t\coloneqq\frac{\diff [(b_t\nu_t)*\kappa]}{\diff[\nu_t*\kappa]},\qquad c_t^{\kappa}\coloneqq\frac{\diff [(c_t\nu_t)*\kappa]}{\diff [\nu_t*\kappa]}
\end{equation*}
By \textcite[Lemma~A.1]{trevisan2016well} we have that $a^{\kappa}$, $b^{\kappa}$, and $c^{\kappa}$ preserve integrability, and have uniformly bounded first and second spatial derivatives on compact sets. Therefore, $(\nu_t*\kappa)_{t\in[0,T]}\subset\cM(\IR^d)$ solves $\mathrm{dFP}(a^{\kappa},b^{\kappa},c^{\kappa})$. Additionally, by Lemma~\ref{lem: smooth superposition} there exists $\boldsymbol{\eta}$ a solution to $\mathrm{aMP}(a^{\kappa},b^{\kappa},c^{\kappa})$ such that $\nu_t*\kappa = \int_{\IR}y\eta_t(\cdot,\diff y)$.

\subsection{Compactness and Convergence of the Augmented Martingale Problem}\label{sec: superposition convergence}
We now assume that $\boldsymbol{\eta}^n$ is a sequence of superposition solutions corresponding to an approximation $\nu^n_t=\pi^n_{\#}\nu_t$ in the pushforward case or $\nu_t^n=\nu_t*\kappa^n$ in the mollification case. In the mollification case, we assume that $\kappa^n\to\delta_0$ narrowly as $n\to\infty$. In the pushforward case we assume that $\pi_n\to\mathrm{Id}$ locally uniformly, and $\nabla\pi^n\to\mathrm{Id}$, and $\nabla^2\pi^n\to 0$ pointwise. Assume also that the sequences of derivatives are uniformly bounded so there exists $C\geq 0$ such that $|\nabla\pi^n|,|\nabla^2\pi^n|\leq C$. Compactness for solutions of the augmented martingale problem then follows directly from \textcite[Section~A.2]{trevisan2016well} since we could recast the superposition principle in the setting of \textcite[Theorem~2.5]{trevisan2016well} by considering superposition solutions to $\mathrm{aFP}(a,b,c)$ and then the tightness arguments are directly applicable. 

We are looking for solutions of $\mathrm{aMP}(a,b,c)$ with a discounted density coinciding with a given solution of $\mathrm{dFP}(a,b,c)$. We therefore consider the class of test functions $C^{1,2}_b([0,T]\times\IR^{d+1};\IR)\ni\tilde{f}(x,y)=yf(x)$ where $f\in C^{1,2}_b([0,T]\times\IR^d;\IR)$. We therefore note that we always have
\begin{equation*}
\int\int_s^t(\mathcal{L}_r^A\tilde{f})\circ e_r\diff r\diff\boldsymbol{\eta}=\int_s^t\int_{\IR^d}(\mathcal{L}^D_rf_r)\int_{\IR}y\eta_r(\diff x,\diff y)\diff r.
\end{equation*}
From now, we use the shorthand $\diff\nu = \diff\nu_t\diff t$ for notational ease. Now, let $\overline{\mathcal{L}^D}$ be a discounted diffusion operator with coefficients $\overline{a},\overline{b},\overline{c}$ continuous and compactly supported. In view of the arguments of \textcite[Section~A.3]{trevisan2016well}, due to narrow convergence of $\boldsymbol{\eta}^n$, we only need to provide bounds on 
\begin{equation}\label{eq: discount superposition bounds}
\limsup_{n\to\infty}\int|\mathcal{L}^{D,n}f-\overline{\mathcal{L}^D}f|\diff\nu^n + \int|\mathcal{L}^Df-\overline{\mathcal{L}^D}f|\diff\nu.
\end{equation}

\noindent
\textbf{Pushforward via Smooth Maps:}

Write $\pi(\mathcal{L}^D)$ for the discounted diffusion operator associated with coefficients $\big(\pi(a),\pi(b),\pi(c)\big)$. Then for a fixed $n$, we have
\begin{align}
\int |\pi^n(\mathcal{L}^{D,n})f-\overline{\mathcal{L}^D}f|\diff\pi^n_{\#}\nu&\leq\int |\mathcal{L}^{D,n}(f\circ\pi^n)-(\overline{\mathcal{L}^D}f)\circ\pi^n|\diff\nu,\notag\\
&\leq\int\frac{1}{2}\sum_{i,j=1}^d|a:(\nabla\pi_i\otimes\nabla\pi_j)-\overline{a}^{i,j}\circ\pi|\diff\nu\notag\\
&\quad+\int\sum_{i=1}^d|\mathcal{L}^S(\pi_i)-\overline{a}^i\circ\pi|\diff\nu + \int |c-\overline{c}\circ\pi|\diff\nu.\notag
\end{align}
Since $\pi^n$ converges to the identity, we have that \eqref{eq: discount superposition bounds} is bounded above by
\begin{equation*}
\int\sum_{i,j=1}^d|a:(\nabla\pi_i\otimes\nabla\pi_j)-\overline{a}^{i,j}\circ\pi|\diff\nu+2\int\sum_{i=1}^d|\mathcal{L}^S(\pi_i)-\overline{a}^i\circ\pi|\diff\nu + 2\int |c-\overline{c}\circ\pi|\diff\nu.
\end{equation*}
Since continuous and compactly supported functions are dense in $L^1(\nu)$, the above can be made arbitrarily small by optimising over $\overline{a},\overline{b},\overline{c}$.

\noindent
\textbf{Mollification by Convolution:}

Let $\overline{\omega}$ be a common bounded and continuous modulus of continuity for $\overline{a},\overline{b},\overline{c}$. Let $\overline{\mathcal{L}^{D,n}}$ be the discounted diffusion operator associated with coefficients
\begin{equation*}
\overline{a}^n\coloneqq\frac{\diff[(\overline{a}\nu)*\kappa^n]}{\diff[\nu * \kappa^n]},\qquad \overline{b}^n\coloneqq\frac{\diff[(\overline{b}\nu) *\kappa^n]}{\diff [\nu*\kappa^n]},\qquad \overline{c}^n\coloneqq\frac{\diff[(\overline{c}\nu)*\kappa^n]}{\diff[\nu*\kappa^n]}.
\end{equation*}
We first show $\lim_{n\to\infty}\int|\overline{\mathcal{L}^{D,n}}f-\overline{\mathcal{L}^D}f|\diff\nu^n=0$. Since $||f||_{C^{1,2}}\leq 1$, we have
\begin{align}
\int|\overline{\mathcal{L}^{D,n}}f-\overline{\mathcal{L}^D}f|\diff\nu^n&\leq\int|\overline{a}^n-\overline{a}|\diff\nu^n+\int|\overline{b}^n-\overline{b}|\diff\nu^n + \int|\overline{c}^n-\overline{c}|\diff\nu^n,\notag\\
&\leq 3\int\overline{\omega}(z)\kappa^n(z)\diff z\to 0.\notag
\end{align}
Therefore, we have
\begin{align*}
\limsup_{n\to\infty}\int|\mathcal{L}^{D,n}f-\overline{\mathcal{L}^D}f|\diff\nu^n&=\limsup_{n\to\infty}\int|\mathcal{L}^{D,n}f-\overline{\mathcal{L}^{D,n}}f|\diff\nu^n\\
&\leq\int|a-\overline{a}|+|b-\overline{b}|+|c-\overline{c}|\diff\nu.
\end{align*}
By the same argument as before, we can make the right hand side arbitrarily small which concludes the proof.
\subsection{Generalisation of Drift and Diffusion Coefficients}
We now use the approximation and convergence described in Section~\ref{sec: smooth approx} and Section~\ref{sec: superposition convergence} to generalise the coefficients $a,b$ and prove Theorem~\ref{thm: discounted superposition}.

\noindent
\subsubsection{Smooth and Bounded Coefficients:}

Let $a:[0,T]\times\IR^d\to\IR^d,$ $b:[0,T]\times\IR^d\to\IS^d_+$, and $c:[0,T]\times\IR^d\to\IR$ be Borel maps such that
\begin{equation}\label{eq: smooth bounded coeffs}
\int_0^T||a_t||_{C^2_b(\IR^d)}+||b_t||_{C^2_b(\IR^d)}+||c_t||_{C^2_b(\IR^d)}\diff t<\infty.
\end{equation}
\begin{lemma}
Let $a,b,c$ be Borel maps satisfying \eqref{eq: smooth bounded coeffs}. Then for every $\overline{\nu}\in\cP(\IR^d)$ there exists $\boldsymbol{\eta}\in\cP\big(C([0,T];\IR^{d+1})\big)$ solving $\mathrm{aMP}(a,b,c)$ such that $\int_{\IR}y\eta_0(\cdot,\diff y)=\overline{\nu}(\cdot)$.
\end{lemma}
The proof is the same as \textcite[Theorem~A.6]{trevisan2016well}, but by considering the augmented process $(X,Y)$ where the regular martingale problem (see \textcite[Definition~2.4]{trevisan2016well}) is given by $\mathrm{aMP}(a,b,c)$.
\begin{lemma}
Let $a:[0,T]\times\IR^d\to\IR^d,$ $b:[0,T]\times\IR^d\to\IS^d_+$, and $c:[0,T]\times\IR^d\to\IR$ be Borel maps such that
\begin{equation*}
\int_0^T||a_t||_{C^2(B)}+||b_t||_{C^2(B)}+||c_t||_{C^2(B)}\diff t<\infty,\quad\text{for every bounded open }B\subset\IR^d
\end{equation*}
Let $(\nu_t)_{t\in[0,T]}\subset\cM(\IR^d)$ be a narrowly continuous solution of $\mathrm{dFP}(a,b,c)$. If $\nu_0\leq 0$, then $\nu_t\leq 0$ for all $t\in[0,T]$. Therefore for $\overline{\nu}\in\cM(\IR^d)$ there exists a unique narrowly continuous solution $\nu$ such that $\nu_0=\overline{\nu}$.
\end{lemma}
The proof is similar to that of \textcite[Theorem~A.7]{trevisan2016well}, so we provide a sketch.
\begin{proof}
Let $g\in C^{\infty}_c([0,T]\times\IR^d)$ be non-negative. Let $\chi_R:\IR^d\to[0,1]$ be a cutoff function and let $a^{\epsilon}_R$, $b^{\epsilon}_R$, $c^{\epsilon}_R$ be a mollification with respect to time and space of $a\chi_R$, $b\chi_R$, and $c\chi_R$, so that $a^{\epsilon}_R$, $b^{\epsilon}_R$, $c^{\epsilon}_R$ satisfy \eqref{eq: smooth bounded coeffs}. Let $\mathcal{L}^{D,\epsilon}_R$ be the diffusion operator associated with $\mathrm{dFP}(a^{\epsilon}_R,b^{\epsilon}_R,c^{\epsilon}_R)$. Let $f^{\epsilon}$ solve 
\begin{equation*}
\partial_tf^{\epsilon}=-\mathcal{L}^{D,\epsilon}_Rf^{\epsilon}+g,\qquad f^{\epsilon}_T=0.
\end{equation*}
Recall the definition of $\mathcal{L}^S$ in Section~\ref{sec: smooth approx} in the construction of the pushforward via smooth maps approximation. Since $\nu$ solves $\mathrm{dFP}(a,b,c)$, $f^{\epsilon}\in C^{1,2}_b(\IR^d)$ and $f^{\epsilon},\nu_0\leq 0$, we have
\begin{align}
0&\geq-\int f^{\epsilon}\chi_R\diff\nu_0,\notag\\
&=\int\chi_R\partial_t f^{\epsilon}+\mathcal{L}^D(f^{\epsilon}\chi_R)\diff\nu,\notag\\
&=\int\chi_R(g-\mathcal{L}^{D,\epsilon}_Rf^{\epsilon} +\mathcal{L}^Sf^{\epsilon})+f^{\epsilon}\mathcal{L}^S\chi_R+cf^{\epsilon}\chi_R+b:\nabla f^{\epsilon}\otimes\nabla\chi_R\diff\nu,\notag\\
&\begin{aligned}\geq\int g\diff\nu - \sup_{t\in[0,T]}||f^{\epsilon}_t||_{C^2_b(\IR^d)}\int&\big[\chi_R(|a^{\epsilon}_R-a|+|b^{\epsilon}_R-b|+|c^{\epsilon}_R-c|)\\&+|\mathcal{L}^S\chi_R|+|b||\nabla\chi_R|)\big]\diff\nu.\end{aligned}\notag
\end{align}
The sup is uniformly bounded in $\epsilon>0$, and as $\epsilon\to 0$ $(a_R^{\epsilon},b_R^{\epsilon},c_R^{\epsilon})\to(a,b,c)$ for $|x|\leq R$. Then letting $R\to\infty$, since $\nabla\chi_R\to 0$ and $\mathcal{L}^{S}\chi_R\to 0$ the right hand side converges to $\int g\diff\nu$.
\end{proof}

\noindent
\subsubsection{Bounded Coefficients:}

This case follows in the same manner as the bounded case in \textcite{trevisan2016well}. We now assume that $a,b,c$ satisfy
\begin{equation}\label{eq: bounded coefficients}
\int_0^T\sup_{x\in\IR^d}|a_t(x)|+\sup_{x\in\IR^d}|b_t(x)|+\sup_{x\in\IR^d}|c_t(x)|\diff t<\infty.
\end{equation}
Using the regularisation kernel $\kappa=\exp(-\sqrt{1+|x|^2})$ and $\kappa^{\epsilon}=\epsilon^{-n}\kappa(x/\epsilon)$, we have with $\nu_t^{\epsilon}=\nu_t*\kappa^{\epsilon}$ that $(\nu_t^{\epsilon})_{t\in[0,T]}$ solves $\mathrm{dFP}(a^{\epsilon},b^{\epsilon},c^{\epsilon})$ with coefficients satisfying \eqref{eq: smooth bounded coeffs}. Thus, existence of $\boldsymbol{\eta}^{\epsilon}\in\cP\big(C([0,T];\IR^{d+1})\big)$ solving $\mathrm{aMP}(a^{\epsilon},b^{\epsilon},c^{\epsilon})$ as a discounted superposition solution to $\nu^{\epsilon}$ follows. Since $\nu^{\epsilon}_0$ is a narrowly convergent sequence, there exists $\theta:\IR\to\IR$ increasing with $\lim_{x\to\infty}\theta(x)=\infty$ such that $\sup_{\epsilon>0}\int\theta(|x|)\diff\nu^{\epsilon}_0\leq 1$. The proof of tightness of $\boldsymbol{\eta}^{\epsilon}$ is then an easy modification of that in \textcite{trevisan2016well}, where the de la Vall\'{e}e Poussin criterion modifies \eqref{eq: bounded coefficients} to
\begin{equation*}
\int_0^T\Theta\big(\sup_{x\in\IR^d}|a_t(x)|\big)+\Theta\big(\sup_{x\in\IR^d}|b_t(x)|\big)+\Theta\big(\sup_{x\in\IR^d}|c_t(x)|\big)\diff t<\infty,
\end{equation*}
for $\Theta:[0,\infty)\to[0,\infty)$ convex non-decreasing with $\lim_{x\to\infty} x^{-1}\Theta(x)=\infty$. Thus, in analogy to \textcite[Corollary~2.5]{trevisan2016well}, with $f^i=(x_i\chi_R)\circ e_t$ for $i=1,\dots,d$ and $f^{d+1}=[\log (y\chi_R)]\circ e_t$. Note that the function $f^{d+1}$ is always well-defined and real valued since $Y$ is assumed to be supported on $(0,e^T)$. Then, for the functional $\Psi:C([0,T];\IR)\to[0,\infty]$, we have for $i=1,\dots,d$, with $x^i_R=x_i\chi_R$:
\begin{align*}
\int\Psi(x^i_R\circ\gamma)\diff\boldsymbol{\eta}^{\epsilon}(\gamma)&\leq\int\theta(|x^i_R|)\diff\eta^{\epsilon}_0+\int_0^T\int\big[\Theta(|\mathcal{L}^{S,\epsilon}_tx^i_R|)\\
&\qquad+\Theta(b^{\epsilon}_t:\nabla x^i_R\otimes\nabla x^i_R)\big]\diff\eta^{\epsilon}_t\diff t,\\
\int\Psi(f^{d+1}\circ\gamma)\diff\boldsymbol{\eta}^{\epsilon}(\gamma)&\leq\int\theta(|f^{d+1}|)\diff\eta^{\epsilon}_0+\int_0^T\int \Theta(|c^{\epsilon}_t|)\diff\eta_t^{\epsilon}\diff t
\end{align*}
We now note that $\eta^{\epsilon}_t=\nu_t^{A,\epsilon}$ where $(\nu_t^{A,\epsilon})_{t\in[0,T]}$ solves $\mathrm{aFP}(a^{\epsilon},b^{\epsilon},c^{\epsilon})$, and further that by construction $\nu_0^{A,\epsilon}=\nu_0^{\epsilon}$. Thus, letting $R\to\infty$, noting that the $y$-marginal of $\eta^{\epsilon}_0$ is $\delta_1$, by the regular superposition principle (\textcite[Theorem~2.5]{trevisan2016well}), we have the uniform bounds:
\begin{align*}
\int\Psi(x^i_R\circ\gamma)\diff\boldsymbol{\eta}^{\epsilon}(\gamma)&\leq\int\theta(|x^i_R|)\diff\nu^{\epsilon}_0+\int_0^T\int\big[\Theta(|\mathcal{L}^{S,\epsilon}_tx^i_R|)\\
&\qquad+\Theta(b^{\epsilon}_t:\nabla x^i_R\otimes\nabla x^i_R)\big]\diff\nu^{A,\epsilon}_t\diff t,\\
\int\Psi(f^{d+1}\circ\gamma)\diff\boldsymbol{\eta}^{\epsilon}(\gamma)&\leq\theta(0)+\int_0^T\int \Theta(|c^{\epsilon}_t|)\diff\nu_t^{A,\epsilon}\diff t
\end{align*}
Thus, by \textcite[Lemma~A.1]{trevisan2016well}, we have the uniform bounds
\begin{align*}
\int\Psi(\gamma^i)\diff\boldsymbol{\eta}^{\epsilon}(\gamma)&\leq 1+\int_0^T\Theta\big(\sup_{x\in\IR^d}|a_t(x)|\big)+\Theta\big(\sup_{x\in\IR^d}|b_t(x)|\big)\diff t,\\
\int\Psi\big(\log(\gamma^{d+1})\big)\diff\boldsymbol{\eta}^{\epsilon}(\gamma)&\leq\theta (0) +\int_0^T\Theta\big(\sup_{x\in\IR^d}|c_t(x)|\big)\diff t.
\end{align*}
Since $\gamma\mapsto \Psi\big(\log(\gamma^{d+1})\big)+\sum_{i=1}^{d}\Psi(\gamma^i)$ is coercive in $C([0,T];\IR^{d+1})$, we have tightness of $\boldsymbol{\eta}^{\epsilon}$. The limit of the approximations obtaining a discounted superposition solution follows directly by the results of Section~\ref{sec: superposition convergence}.

\noindent
\subsubsection{Locally Bounded Coefficients:}

We now assume that $a,b,c$ satisfy
\begin{equation}\label{eq: local bdd coeffs}
\int_0^T\sup_{x\in B}\{|a_t(x)|+|b_t(x)|+|c_t(x)|\}\diff t<\infty,\qquad\text{for every bounded borel }B\subset\IR^d.
\end{equation}
Let $M\geq 1$ and $\chi_M$ be a cutoff function. Define $\pi_M:\IR^d\to\IR^d$ by $\pi_M(x)=x\chi_M(x)$. Thus, by \eqref{eq: local bdd coeffs}, we have $|\mathcal{L}^D(\pi^i_M)|\leq||\pi^i_M||_{C^2}\sup_{|x|\leq 2M}\{|a(x)|+|b(x)|+|x_rc(x)|\}$ and $|b:\nabla\pi^i_M\otimes\nabla\pi^j_M|\leq||\pi^i_M||_{C^1}\sup_{|x|\leq MR}|b(x)|$. Thus with $\nu^M_t\coloneqq\pi^M_{\#}\nu_t$, we have $(\nu_t^M)_{t\in[0,T]}$ solves $\mathrm{dFP}(\pi^M(a),\pi^M(b),\pi^M(c))$ with coefficients satisfying \eqref{eq: bounded coefficients} so we have $\boldsymbol{\eta}^M\in\cP\big(C([0,T];\IR^{d+1})\big)$ solving $\mathrm{aMP}\big(\pi^M(a),\pi^M(b),\pi^M(c)\big)$ as a discounted superposition solution to $\nu^M$. The argument of tightness is similar to that of the previous section with the modified bound to be
\begin{equation*}
\int_0^T\int\Theta_1(|a_t|)+\Theta_2(|b_t|)+\Theta_3(|c_t|)\diff\nu_t\diff t <\infty
\end{equation*}
As before, we obtain the following inequalities
\begin{align*}
\int\Psi(x^i_R\circ\gamma)\diff\boldsymbol{\eta}^{M}(\gamma)&\leq\int\theta(|x^i_R|)\diff\eta^{M}_0+\int_0^T\int\big[\Theta(|\mathcal{L}^{S,M}_tx^i_R|)\\
&\qquad+\Theta(b^{M}_t:\nabla x^i_R\otimes\nabla x^i_R)\big]\diff\eta^{M}_t\diff t,\\
\int\Psi(f^{d+1}\circ\gamma)\diff\boldsymbol{\eta}^{M}(\gamma)&\leq\int\theta(|f^{d+1}|)\diff\eta^{M}_0+\int_0^T\int \Theta(|c^{M}_t|)\diff\eta_t^{M}\diff t.
\end{align*}
By a similar reasoning to the smooth and bounded coefficients case, using Jensen's inequality, we obtain the following uniform bounds
\begin{align*}
\int\Psi(\gamma^i)\diff\boldsymbol{\eta}^M(\gamma)&\leq 1+\int_0^T\int\Theta_1(|a_t^i|)+\Theta_2(|b_t^{i,i}|)\diff\nu_t^A\diff t\\
\int\Psi\big(\log(\gamma^{d+1})\big)\diff\boldsymbol{\eta}^M(\gamma)&\leq\theta(0)+\int_0^T\int\Theta_3(|c_t|)\diff\nu_t^A\diff t.
\end{align*}
The limit of approximations obtaining a discounted superposition solution follows directly by Section~\ref{sec: superposition convergence}.

\noindent
\subsubsection{General Case:}

The approximation is achieved via convolution, with the approximating $\nu^{\epsilon}$ having superposition solutions since the approximating coefficients satisfy \eqref{eq: local bdd coeffs}. The tightness follows an identical argument, with similar bounds to the bounded coefficient case obtained. The limit of the approximations obtaining a discounted superposition solution follows directly by Section~\ref{sec: superposition convergence}. This concludes the proof of Theorem~\ref{thm: discounted superposition}.
\addcontentsline{toc}{section}{References}
\printbibliography
\end{document}